\newtheorem{theorem}{Theorem}
\newtheorem{corollary}{Corollary}
\newtheorem{proposition}{Proposition}
\theoremstyle{plain}
\newtheorem{lemma}{Lemma}
\theoremstyle{definition}
\newtheorem{definition}{Definition}
\newtheorem{example}{Example}
\theoremstyle{remark}
\newtheorem{remark}{Remark}
\newcommand{\R}{\mathbb{R}}
\newcommand{\END}{\hfill $\blacksquare$}
\def\endthebibliography{%
	\def\@noitemerr{\@latex@warning{Empty 'thebibliography' environment}}%
	\endlist
}
\title{\LARGE \bf
	Automaton-based Implicit Controlled Invariant Set Computation for Discrete-Time Linear Systems
}
\author{Zexiang Liu\quad\;\; Tzanis Anevlavis\quad\;\; Necmiye Ozay\quad\;\; Paulo Tabuada

\thanks{Z. Liu and N. Ozay are with the Dept. of Electrical Engineering and Computer Science, Univ. of Michigan, Ann Arbor, MI 48109, USA
        {\tt\small \{zexiang, necmiye\} @umich.edu}. T. Anevlavis and P. Tabuada are with the UCLA Electrical and Computer Engineering Department,
	Los Angeles, CA 90095
        {\tt\small \{janis10, tabuada\} @ucla.edu}. 
}
\thanks{This work was partially supported by  the NSF grants CNS-1931982 and CCF-1918123 and the CONIX Research Center, one of six centers in JUMP, a Semiconductor Research Corporation (SRC) program sponsored by DARPA.}
}
\begin{document}
	
	\maketitle
	\thispagestyle{empty}
	\pagestyle{empty}

	\begin{abstract}
	In this paper, we derive closed-form expressions for implicit controlled invariant sets for discrete-time controllable linear systems with measurable disturbances. In particular, a disturbance-reactive (or disturbance feedback) controller in the form of a  parameterized finite automaton is considered. We show that, for a class of automata, the robust positively invariant sets of the corresponding closed-loop systems can be expressed by a set of linear inequality constraints in the joint space of system states and controller parameters. This leads to an implicit representation of the invariant set in a lifted space. We further show how the same parameterization can be used to compute invariant sets when the disturbance is not available for measurement. 
	\end{abstract}



\section{Introduction}

When tasked with synthesizing a controller in order to ensure safety of a plant under uncertainties, the objective is to indefinitely keep the state of the plant within a set of safe states. A natural solution to this task is to initialize the state in a  Robust Controlled Invariant Set (RCIS) within the set of safe states. RCISs have the property that any trajectory starting within an RCIS can always be forced to remain inside the RCIS and, therefore, inside the set of safe states.
Consequently, RCISs are at the core of safety-critical applications. 

Since the conception of the standard method for computing the Maximal RCIS  of discrete-time systems \cite{bertsekas1972infreach, nilsson2017correct}, is known to suffer from poor scaling with the system's dimension and no guarantees of termination, different approaches attempted to alleviate these drawbacks. In the case of bounded disturbances, when the set of safe states are polytopes, \cite{rungger2017rcislinsys} computes inner and outer approximations of RCISs for linear systems with guarantees on finite-time termination. For the same system class, a different line of works \cite{tahir2014low, gupta2018full,michel2018invariant} approximate the Maximal/Minimal RCIS by first closing the loop with a linear state feedback law and then computing the Robust Positively Invariant Set (RPIS) of the closed-loop system. This group of methods are typically very conservative since only linear state feedback controllers are considered. 

Finally, several recent methods   \cite{anevlavis2019cis2m,anevlavis2020simple,anevlavis2021enhanced, wintenberg2020implicit}, including previous works from the authors, develop approaches for constructing implicit RCIS in closed form, represented as polytopes in a high-dimensional space whose projection onto the original state space is an RCIS. By avoiding  computing RCISs explicitly, those methods can work for systems with higher dimensions. It is indeed the case in many practical applications, such as model predictive control and supervision control, that knowledge of the explicit RCIS is not required and the implicit representation suffices\cite{anevlavis2021controlled,liu2021safe}. 

Inspired by the recent progress on implicit RCISs, in this work we propose a novel approach to compute implicit RCISs for discrete-time linear systems. In addition, the aforementioned works consider non-measurable disturbances only, however, in many safety-critical applications, incoming disturbances can be measured in ahead\cite{xu2019design} and, hence, are considered measurable \cite{nilsson2017correct}. Thus, unlike the existing works, we develop a method that works for both measurable and non-measurable disturbances, achieved by introducing an automaton-based controller whose input is exactly the measurable disturbance.
More specifically, our contributions are as follows:   \\
\hspace*{2mm} 1) We propose an automaton-based method for computing \emph{implicit RCISs} for a class of linear systems that contains the class of controllable linear systems, with measurable disturbances. \\
\hspace*{2mm} 2) We derive conditions on the structure of the automaton such that the implicit RCIS is computed in closed-form.\\
\hspace*{2mm} 3) We present a generic connection between measurable and non-measurable disturbances, enabling the proposed method to work with systems with non-measurable disturbances. 

In addition to the above, we demonstrate the practicality of the implicit RCIS in the task of supervision control for the lane keeping problem. The goal is to modify nominal control inputs, as needed, to keep the system’s trajectory within a set of safe states. We show that this is achieved by solving an optimization problem using the implicit RCIS. 

The paper is organized as follows: In Section~\ref{sec:setup}, the problem is mathematically set up, along with the essential definitions and assumptions. Section \ref{sec:framework} lays down the ideas for computing an implicit RCIS for systems with measurable disturbances. Subsequently, Section \ref{sec:closedform} investigates when the implicit RCIS can be computed in closed-form, while Section \ref{sec:bridge} connects the proposed method to the case of non-measurable disturbances. Section~\ref{sec:compEval} provides a computational evaluation of the proposed method. Finally, conclusion is found in Section~\ref{sec:concl}.
To keep a streamlined presentation, the proofs of all theorems are found in Appendix.

\noindent \textbf{Notation:} The Minkowski sum of two sets $A$ and $B$ is denoted by $A+B = \{a+b \mid a\in A, b\in B\} $. For a singleton $\{x\} $, we write the Minkowski sum $\{x\} + B $ as $x+B$. We denote the convex hull of a set $A$ by $ \mathbf{CH}(A)$. For a sequence $(d(t))_{t=0}^{N}$, we denote the finite subsequence $(d(t))_{t=a}^{b}$ by $d(a:b)$ with $0\leq a\leq b \leq N$. We denote the projection of a set $C$ in $\R^{n+m}$ onto the first $n$ coordinates by $\textrm{Proj}_{1:n}(C)$ in $\R^n$. For two vectors $x\in \R^n$ and $y\in \R^m$, $(x,y)$ denotes the concatenated vector in $\R^{n+m}$.

\section{Preliminaries}
\label{sec:setup}
In this work, we consider a discrete-time linear system $\Sigma$:
\begin{align} 
\label{eq:dtls} 
	\Sigma:	x(t+1) = A x(t) + Bu(t) + d(t), 
\end{align}
where $x\in \R^{n}$ is the state of $\Sigma$, $u \in \R^{m}$ is the input, and \mbox{$d\in D \subseteq \R^{n}$} is a disturbance term. The set $D$ contains all possible values of $d$. The disturbance $d$ is \emph{measurable} if $u(t)$ can be determined based on the measurement $d(t)$; otherwise, $d$ is \emph{non-measurable}. 
\begin{definition}[Safe set]
\label{def:safeset}
Let $S \subseteq \R^{n+m}$ be the set of desired state and input pairs $(x,u)$, called the \emph{safe set} of the system. That is, we want $(x(t),u(t))\in S$ for all $t\geq 0$. 
\end{definition}

The difference between measurable and non-measurable disturbances is reflected in the following definitions of a Robust Controlled Invariant Set (RCIS). 
 
\begin{definition}[RCIS with non-measurable disturbance]
\label{def:cisnonmeas}
Consider a \emph{non-measurable} disturbance $d$. A set $C \subseteq \R^{n}$ is an RCIS for the system $\Sigma$ within the safe set $S$ if:
\begin{align}
\label{eq:cisnonmeas}
\hspace{-1.8mm}	\forall x\in C, \exists u \text{ such that } (x,u)\in S, Ax+Bu+d \in C,\forall d\in D.
\end{align}
\end{definition}

\begin{definition}[RCIS with measurable disturbance]
\label{def:cismeas}
For the disturbance $d$ being \emph{measurable}, a set $C \subseteq  \R^{n}$ is an RCIS for the system $\Sigma$ within the safe set $S$ if: 
\begin{align}
\label{eq:cismeas}
\hspace{-1.8mm}	\forall x\in C, \forall d\in D,\exists u \text{ such that } (x,u)\in S, Ax+Bu+d \in C.
\end{align}
\end{definition}
Note that the order of the quantifiers $\forall d \in D$ and $\exists u$ is swapped in the  definitions above, which means an RCIS $C$ with respect to a non-measurable disturbance is guaranteed to be an RCIS for the same system with measurable disturbance but not vice versa. Finally, we call a set $C_{max}$ the \emph{Maximal RCIS} within $S$ if it is controlled invariant and contains every RCIS in $S$. 

Next, consider an autonomous system $\Sigma_{a}$:
\begin{align}
\Sigma_{a}: x(t+1) = f(x(t),d(t))
\end{align}
with state $x\in \R^{n}$ and a disturbance term $d\in D$. 

\begin{definition}[Robust Positively Invariant Set (RPIS)]
	\label{def:fis}
	A set $C \subseteq S$ is an \emph{RPIS} of the autonomous system $\Sigma_{a}$ within $S$ if:
	\begin{align}
	\label{eq:fis}
	\forall x\in C \text{ we have that } f(x,d)\in C,  \forall d\in D.  
	\end{align}
	A set $C_{max}$ is the \emph{Maximal RPIS} within $S$ if it is positively invariant and contains every robust positively invariant set in $S$. 
\end{definition}
Notice that, compared to systems with control inputs, there is no concept of measurable or non-measurable disturbances for autonomous systems. 

\begin{definition}[Reachable set for autonomous systems]
	\label{def:reachautonsys}
	Let $\Sigma_a$ be an autonomous system. 
	The \emph{reachable set} $\mathcal{R}(\Sigma_{a}, x_0)$ of $\Sigma_{a}$ from state $x_0$ is the set of all possible states that the system may visit. 
	Formally, $x\in \mathcal{R}(\Sigma_{a},x_0)$ if and only if there exists a trajectory $(x(t))_{t=0}^{k}$, for some $k\geq0$, of $\Sigma_{a}$ under disturbance sequence $(d(t))_{t=0}^{k-1}\in D^{k}$ with $x(0) = x_0 $ and $x(k) = x$.
\end{definition}

\begin{proposition} \label{thm:RInv} 
	The set $C = \{x\in \R^{n} \mid \mathcal{R}(\Sigma_{a},x) \subseteq S\} $, i.e., the set of states whose corresponding reachable set is contained in the safe set $S$, is the Maximal RPIS for $\Sigma_{a}$ within $S$.	
\end{proposition}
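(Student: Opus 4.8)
The plan is to verify the two defining properties of a Maximal RPIS separately: first that $C$ is itself an RPIS within $S$, and second that $C$ contains every RPIS within $S$. Two elementary properties of the reachable set do almost all the work. The first is \emph{self-reachability}: taking the trivial trajectory of length $k=0$ in Definition~\ref{def:reachautonsys} shows $x \in \mathcal{R}(\Sigma_a, x)$ for every $x$. The second is a \emph{nesting} property: for any $x$ and any $d \in D$, every state reachable from $f(x,d)$ is also reachable from $x$, since one may prepend the transition $x \mapsto f(x,d)$ to any trajectory emanating from $f(x,d)$; hence $\mathcal{R}(\Sigma_a, f(x,d)) \subseteq \mathcal{R}(\Sigma_a, x)$.

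For the first property, I would begin by arguing $C \subseteq S$: if $x \in C$ then $\mathcal{R}(\Sigma_a, x) \subseteq S$, and by self-reachability $x \in \mathcal{R}(\Sigma_a, x)$, so $x \in S$. For forward invariance, fix $x \in C$ and an arbitrary $d \in D$. By the nesting property, $\mathcal{R}(\Sigma_a, f(x,d)) \subseteq \mathcal{R}(\Sigma_a, x) \subseteq S$, so $f(x,d) \in C$. Since $d$ was arbitrary, this is precisely condition~\eqref{eq:fis}, establishing that $C$ is an RPIS within $S$.

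For maximality, let $C'$ be an arbitrary RPIS within $S$ and fix $x \in C'$. An induction on trajectory length, invoking the forward-invariance of $C'$ at each step, shows that every state visited along every trajectory of $\Sigma_a$ starting at $x$ remains in $C'$; that is, $\mathcal{R}(\Sigma_a, x) \subseteq C' \subseteq S$. Therefore $x \in C$, and since $x$ was arbitrary, $C' \subseteq C$. Combined with the first part, this shows $C$ is positively invariant and contains every RPIS in $S$, which is exactly the definition of the Maximal RPIS.

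I expect no serious obstacle here, as the statement is essentially a fixed-point characterization of the maximal invariant set. The only points requiring care are the two reachable-set facts above: making the ``prepend a transition'' argument and the inductive ``the trajectory stays in $C'$'' argument precise against the formal definition of $\mathcal{R}(\Sigma_a, \cdot)$ in Definition~\ref{def:reachautonsys}, including the degenerate boundary case of length-zero trajectories that underlies self-reachability.
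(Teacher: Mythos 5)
Your proposal is correct and follows essentially the same route as the paper's proof: the invariance part uses the identical nesting fact $\mathcal{R}(\Sigma_a, f(x,d)) \subseteq \mathcal{R}(\Sigma_a, x) \subseteq S$, and the maximality part is the same observation that any point of an RPIS has its reachable set contained in $S$, which the paper states directly and you justify by a short induction. Your additional details (the length-zero trajectory giving $x \in \mathcal{R}(\Sigma_a,x)$, hence $C \subseteq S$, and the explicit induction) merely fill in steps the paper leaves implicit.
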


\subsection{Problem Setup}
In the first part of this work, we focus on the computation of RCISs for systems with measurable disturbances. More specifically, we propose a method that computes the desired implicit representation for an RCIS in closed-form based on the following assumptions. In the second part, we extend our method to compute RCISs for non-measurable disturbances.

\noindent \textbf{Assumption 1:}  The matrix $A$ is nilpotent. That is, there exists a non-negative integer $h \leq n$ such that $A^{h}= 0$.

\begin{remark}
	Assumption 1 is satisfied by any controllable system, as there always exists a feedback gain $ K $ such that $ A+BK $ is nilpotent. Thus, any controllable system satisfies Assumption 1 by pre-feedbacking the system with $ u = Kx +v $ and taking  $ v $ as the new control input\footnote{Accordingly, given the safe set $S_{xu}$ for $(x,u)$, the safe set $S_{xv}$ for the new state-input pair $(x,v)$ should be $S_{xv} = \{(x,v)\mid (x,Kx+v)\in S_{xu}\}$.} \cite[Ch.3]{Antsaklis2006LinSys}.
\end{remark}

\noindent \textbf{Assumption 2:} The safe set $S \subset \R^{n+m}$ and the disturbance set $D \subset \R^{n}$ are both polytopes. 

The next theorem shows that to compute an RCIS for systems with a measurable disturbance, we only need to consider the (finite) vertices of $D$. 

\begin{proposition} \label{thm:finite_D} 
	Under Assumption 2, consider a system $\Sigma$ 
	with a measurable disturbance $d\in D$, and let $D_{v}$ be the set of vertices of $D$. 
	Let $\Sigma'$ be the same system as $\Sigma$ but with the measurable disturbance $d\in D_v$.
	Then, a convex set $C$ is an RCIS for $\Sigma$ if and only if $C$ is an RCIS for $\Sigma'$.
\end{proposition}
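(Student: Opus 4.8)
The plan is to establish the two directions separately. The ``only if'' direction is immediate: since $D_v \subseteq D$, any convex set $C$ satisfying the invariance condition \eqref{eq:cismeas} for every $d \in D$ satisfies it \emph{a fortiori} for every $d \in D_v$, so an RCIS for $\Sigma$ is automatically an RCIS for $\Sigma'$. This step uses neither convexity nor Assumption 2.

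The content is in the ``if'' direction. Assuming $C$ is a convex RCIS for $\Sigma'$, I would fix an arbitrary $x \in C$ and an arbitrary $d \in D$, and write $d$ as a convex combination of vertices, $d = \sum_i \lambda_i d_i$ with $d_i \in D_v$, $\lambda_i \geq 0$, and $\sum_i \lambda_i = 1$, which is possible since $D$ is a polytope by Assumption 2. For each vertex $d_i$, the RCIS property of $\Sigma'$ yields an input $u_i$ with $(x,u_i) \in S$ and $Ax + Bu_i + d_i \in C$. The candidate input for $d$ is then the matching convex combination $u = \sum_i \lambda_i u_i$. To verify this choice, note that $\sum_i \lambda_i (x,u_i) = (x,u)$, so $(x,u) \in S$ by convexity of $S$; and by linearity of the dynamics in $u$ and $d$, $Ax + Bu + d = \sum_i \lambda_i (Ax + Bu_i + d_i)$, which lies in $C$ by convexity of $C$ since each summand does. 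This verifies \eqref{eq:cismeas} for the arbitrary $d \in D$, so $C$ is an RCIS for $\Sigma$.

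The key point, and the reason the argument goes through, is that each witness $u_i$ is permitted to depend on the vertex $d_i$, which is exactly the measurable-disturbance semantics $(\forall d\, \exists u)$ encoded in Definition~\ref{def:cismeas}: blending these vertex-dependent inputs is what produces a valid $u$ for a general $d$. I expect the main obstacle to be articulating this cleanly rather than executing it, since the corresponding statement fails for non-measurable disturbances (Definition~\ref{def:cisnonmeas}), where a single $u$ must be committed before $d$ is revealed and the vertex-wise inputs cannot be combined. Convexity of $C$ is equally indispensable, which is why the proposition restricts attention to convex $C$.
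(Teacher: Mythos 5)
Your proof is correct and follows essentially the same route as the paper's own argument: the ``only if'' direction by inclusion $D_v \subseteq D$, and the ``if'' direction by writing $d$ as a convex combination of vertices, taking the vertex-wise witness inputs $u_i$, and blending them into $u = \sum_i \lambda_i u_i$, with convexity of $S$ and $C$ closing the argument. Your added remark on why the measurable-disturbance quantifier order ($\forall d\,\exists u$) is what licenses the vertex-dependent inputs is accurate and goes slightly beyond what the paper makes explicit.
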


According to Proposition~\ref{thm:finite_D}, given any polytopic disturbance set $D$, we can substitute $D$ by the finite set $D_{v}$ without loss of generality. Thus, for the remaining of this paper, we directly assume that $D$ is a finite set, as stated below.

\noindent \textbf{Assumption 3:} {The disturbance set $D \subset \R^{n}$ is given as a finite set of vertices.}

\noindent \textbf{Problem 1:} For a linear system $\Sigma$, 
a safe set $S$, and a disturbance set $D$ satisfying Assumptions 1, 2 and 3, compute a convex implicit RCIS  of $\Sigma$ within $S$ in closed-form.

\section{Controlled Invariant Set Computation Framework} 
\label{sec:framework}  
The Maximal RCIS for a system in \eqref{eq:dtls} with measurable and/or non-measurable disturbances can be computed by a standard iterative method (\cite{nilsson2017correct,bertsekas1972infreach}), which is not guaranteed to terminate in finite time and does not scale to high-dimensional systems. To reduce the computation burden, many existing works close the loop with a linear feedback controller and then compute an RPIS of the closed-loop system as an under-approximation of the Maximal RCIS.  In this section, we extend this idea to a more general case: We close the loop with a parameterized nonlinear disturbance-feedback controller. Then, by computing an RPIS of an augmented closed-loop system, we search for the feasible initial states and controller parameters simultaneously such that the closed-loop trajectory satisfies the safety constraints.  Lastly, we retrieve an RCIS from the RPIS of the augmented system.

First, we want to determine an appropriate controller structure. We draw some inspirations from Definition \ref{def:cismeas}: Given any RCIS  $ C $ for $\Sigma$ under measurable disturbances, by definition, there exists a \emph{memoryless state-disturbance feedback controller} 
\mbox{$u(t) = k\left( x(t), d(t) \right)$} such that $ C $ is the Maximal RPIS of the closed-loop system. In other words, any RCIS,  including the Maximal RCIS, is the Maximal RPIS of a closed-loop system with respect to some memoryless state-disturbance controller. Thus, to minimize the conservativeness of the closed-loop RPIS, it is enough to consider the class of memoryless state-disturbance feedback controllers. Furthermore, it is well-known that any memoryless state-disturbance feedback controller is equivalent to a \emph{disturbance feedback controller with memory}, explained by the following example.
\begin{example} \label{ex:1}
	Consider a memoryless state-disturbance feedback controller $u(t) = \kappa(x(t), d(t))$. This controller can be equivalently expressed as:
	\begin{align}
	\begin{cases}
	s(t+1) &=  As(t)+ B \kappa(s(t),d(t)) + d(t),    \\
	u(t) &=  \kappa(s(t), d(t)),
	\end{cases}
	\end{align}
	with $s(0) =  x(0)$. That is, the internal dynamics of the controller forms a state estimator. \END
\end{example}

For above reasons, we consider a \emph{parameterized disturbance-feedback controller $\Sigma_c$ with memory}: 
\begin{align} \label{eqn:cont} 
	\Sigma_{c}:
	\begin{cases}
	    s(t+1) &= \mathcal{T}(s(t), d(t); \theta),  \\
	    u(t) &= o(s(t), d(t); \theta). 
	\end{cases}
\end{align}
In the above, $s$ is the internal state (memory) of the controller that distills useful information from the disturbance input $d\in D$, and $u$ is the output of the controller. The same $d$ and $u$ correspond to the disturbance and the control input of $\Sigma$ respectively. The state transition function $\mathcal{T}$ and the output function $o$ map the current state $s(t)$ and the disturbance input $d(t)$ into the next state $s(t+1)$ and the current output $u(t)$ respectively. Finally, $ \theta$ is a constant vector that parameterizes the state transition function $\mathcal{T}$ and the output function $o$. The value of $\theta$ can depend on the initial state $x_{0}$ of the system $\Sigma$, such as in Example \ref{ex:1}, $ \theta = x(0) $. In what follows, we assume that the functions $\mathcal{T}$ and $o$ in $\Sigma_{c}$ are known. We discuss how to select $\mathcal{T}$ and $o$ in the next section. 

Closing the loop of $\Sigma$ in \eqref{eq:dtls} with the controller in \eqref{eqn:cont}, we obtain the following closed-loop system augmented with the controller internal state $s$ and the constant vector $\theta$:
\begin{align}
\hspace{-1.5mm} 
   \Sigma_{cl}: \hspace*{-0.5mm}
      \begin{bmatrix}
		  x(t+1)\\
		  \theta(t+1)\\
		  s(t+1)
      \end{bmatrix} =  
	  \begin{bmatrix}
		  A x(t)  + B o(s(t),d(t);\theta(t)) + d(t)\\
		  \theta(t)\\
		  \mathcal{T}(s(t), d(t); \theta(t))
      \end{bmatrix}
      ,
\end{align}
with the augmented state $(x,\theta, s)$ and the disturbance $d\in D$. 

In the above augmented system, we can calculate feasible initial states $x_0$ and controller parameters $\theta$ simultaneously. 
The control input $u(t)$ of $\Sigma$ is equal to $o(x(t),d(t); \theta(t))$, as a function of the augmented state. Then, given the safe set $S$ of the system $\Sigma$, we define the safe set $S_{cl}$ for the closed-loop system $\Sigma_{cl}$ by: 
\begin{align}
  S_{cl} = \left\{ (x, \theta,s) \mid (x,o(s,d; \theta))\in S, \forall d \in D)\right\}.
\end{align}
The following theorem connects the problem of computing an RCIS for the system $\Sigma$ with the problem of computing an RPIS for the closed-loop system $\Sigma_{cl}$.
\begin{theorem} \label{thm:CH} 
	Let $C_{cl}$ be an RPIS for the closed-loop system $\Sigma_{cl}$ within the safe set $S_{cl}$. Then, the convex hull $\textbf{CH} (\textrm{Proj}_{1:n}(C_{cl}))$ of the projection of $C_{cl}$ onto the first $n$ coordinates is a convex RCIS for the system $\Sigma$ within the safe set $S$.  
\end{theorem}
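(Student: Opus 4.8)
The plan is to argue in two stages: first I would show that the (possibly nonconvex) projection $P := \textrm{Proj}_{1:n}(C_{cl})$ is already an RCIS for $\Sigma$ in the sense of Definition~\ref{def:cismeas}, and then lift this property to the convex hull $C := \mathbf{CH}(P)$. Convexity of $C$ is immediate from its definition, so the entire content lies in verifying the RCIS condition.

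For the first stage, fix $x \in P$ and pick a witness $(\theta, s)$ with $(x,\theta,s) \in C_{cl}$. Given any measured disturbance $d \in D$, I would set $u = o(s,d;\theta)$. Since $C_{cl} \subseteq S_{cl}$, the definition of $S_{cl}$ yields $(x,u) \in S$. Moreover, the RPIS property of $C_{cl}$ (Definition~\ref{def:fis}), applied to $(x,\theta,s)$ and $d$, guarantees that the successor augmented state $(Ax+Bu+d,\ \theta,\ \mathcal{T}(s,d;\theta))$ lies in $C_{cl}$; projecting onto the first $n$ coordinates gives $Ax+Bu+d \in P$. As $x$ and $d$ were arbitrary, this establishes exactly the quantifier pattern $\forall x \in P,\ \forall d \in D,\ \exists u$ of Definition~\ref{def:cismeas}, so $P$ is an RCIS.

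For the second stage, take any $x \in C$ and write it, via Carath\'eodory, as a finite convex combination $x = \sum_i \lambda_i x_i$ with $x_i \in P$, $\lambda_i \geq 0$, $\sum_i \lambda_i = 1$. For a measured $d$, let $u_i$ be the input produced for $x_i$ in the first stage and set $u = \sum_i \lambda_i u_i$. Convexity of $S$ (Assumption 2) gives $(x,u) = \sum_i \lambda_i (x_i,u_i) \in S$, and affinity of the dynamics together with $\sum_i \lambda_i = 1$ lets me write $Ax+Bu+d = \sum_i \lambda_i (Ax_i + Bu_i + d)$, a convex combination of points of $P \subseteq C$, hence in $C$ by convexity. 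This is again the RCIS condition, now verified for $C$.

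I expect the only real subtlety to be the treatment of the disturbance term in the second stage: a naive combination appears to leave a spurious contribution, but the identity $d = \sum_i \lambda_i d$, valid because the weights sum to one, absorbs it cleanly into the per-vertex successors. It is worth emphasizing that forming $u$ as a $d$-dependent convex combination of the $u_i$ is legitimate precisely because the disturbance is measurable, i.e., the order of quantifiers in Definition~\ref{def:cismeas} permits $u$ to depend on $d$; the same construction would fail under the non-measurable convention of Definition~\ref{def:cisnonmeas}, which is the key place where the measurability hypothesis enters.
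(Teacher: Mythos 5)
Your proof is correct and takes essentially the same route as the paper's: you pick witnesses $(x_i,\theta_i,s_i)\in C_{cl}$ for the convex-combination points, set $u_i = o(s_i,d;\theta_i)$, get safety from membership in $S_{cl}$ and invariance from the RPIS property, and combine convexly using the measurability of $d$ (the $\forall d\,\exists u$ quantifier order), exactly as in the appendix. The only difference is organizational --- you factor the argument into two stages (the raw projection is itself an RCIS, then convex hulls preserve the RCIS property under convex $S$), while the paper performs both steps in a single pass; the underlying construction is identical.
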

According to Theorem~\ref{thm:CH}, we propose a novel framework for computing RCISs: Given a controller specified by functions $\mathcal{T}$ and $o$, we first compute the Maximal RPIS $C_{cl}$ of the augmented system $\Sigma_{cl}$ within $S_{cl}$, and then take the convex hull of the projection of $C_{cl}$.  

The size of the resulting RCIS depends on the choice of functions $\mathcal{T}$ and $o$. In Section \ref{sec:dfts}, we demonstrate two classes of functions $\mathcal{T}$ and $o$ that lead to larger RCISs as the number of parameters increases. In theory, there exist functions $\mathcal{T}$ and $o$ such that the resulting RCIS meets the Maximal RCIS. However, how to find $\mathcal{T}$ and $o$ achieving the Maximal RCIS is beyond the scope of this paper.  One significant advantage of this framework, explored in the next section, is that by carefully designing $\mathcal{T}$ and $o$,  this novel framework enables closed-form construction of RCISs. 

\section{Closed-form construction of \\ implicit robust controlled invariant sets}
\label{sec:closedform}
In the previous section, we presented a framework for
computing RCISs. Still, there are two main questions to be
addressed. First, can we compute the Maximal RPIS $C_{cl}$
efficiently? Second, in practice the convex hull computation
is expensive, can we avoid this operation? In this section, we show that by carefully designing $\mathcal{T}$ and $o$, we address both questions.

\subsection{Computing $C_{cl}$ in closed-form}
From Proposition~\ref{thm:RInv}, we have that:
\begin{align}
	C_{cl} = \left\{(x, \theta, s) \mid \mathcal{R}(\Sigma_{cl}, (x,\theta,s)) \subseteq S_{cl}\right\}. \label{eqn:inv_reach} 
\end{align}
According to \eqref{eqn:inv_reach}, if we express the reachable set $\mathcal{R}(\Sigma_{cl}, (x,\theta,s))$ in closed-form, then we obtain a closed-form expression of $C_{cl}$. The next theorem gives a sufficient condition for the reachable set $ \mathcal{R}(\Sigma_{cl})$ to admit a closed-form expression.

\begin{theorem} \label{thm:finite_reach} 
	Under Assumption 1 and 3,	if the state $s$ of the controller belongs to a finite set $Q$, then given any initial state  $(x, \theta, s)$, the reachable set $\mathcal{R}(\Sigma_{cl}, (x, \theta,s))$ is finite. Moreover, it can be expressed in closed-form. 
\end{theorem}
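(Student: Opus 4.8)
The plan is to exploit three finiteness facts about the augmented dynamics: the parameter $\theta$ is frozen along trajectories, the controller state $s$ is confined to the finite set $Q$, and the nilpotency of $A$ prevents the $x$-coordinate from accumulating an unbounded history. The first two facts reduce the problem to controlling the $x$-component, and the third is what bounds it.

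First I would observe that along any trajectory $\theta(t) = \theta$ for all $t$, and that $s(t+1) = \mathcal{T}(s(t), d(t); \theta)$ keeps $s(t) \in Q$. Then I would introduce the \emph{effective disturbance} $w(t) := B\, o(s(t), d(t); \theta) + d(t)$ and note that, since $s(t) \in Q$ and $d(t) \in D$ with $Q$ and $D$ both finite (Assumption 3) and $\theta$ fixed, $w(t)$ ranges over the finite set
\begin{align*}
W := \{ B\, o(q, d; \theta) + d \mid q \in Q,\ d \in D \}.
\end{align*}
With this substitution the $x$-recursion collapses to the linear recursion $x(t+1) = A x(t) + w(t)$, whose solution is $x(k) = A^k x + \sum_{i=0}^{k-1} A^{k-1-i} w(i)$.

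Next I would bring in Assumption 1. Since $A^h = 0$, for every $k \ge h$ the term $A^k x$ vanishes and every summand with exponent at least $h$ drops out, leaving $x(k) = \sum_{\ell=0}^{h-1} A^{\ell} w(k-1-\ell)$, a sum of at most $h$ terms each drawn from the finite set $W$. Hence for all $k \ge h$ the value $x(k)$ lies in the finite set $\{ \sum_{\ell=0}^{h-1} A^\ell w_\ell \mid w_\ell \in W \}$, independently of how large $k$ is; and for the finitely many $k < h$ the reachable $x(k)$ likewise lie in a finite set given by the same formula together with the frozen term $A^k x$. Combining the three coordinates, every reachable tuple $(x(k), \theta, s(k))$ lives in the product of a finite $x$-set, the singleton $\{\theta\}$, and $Q$; therefore $\mathcal{R}(\Sigma_{cl}, (x,\theta,s))$ is finite.

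For the closed-form claim I would make this enumeration explicit: the reachable set is the finite union, over disturbance sequences $d(0:k-1) \in D^{k}$, of the tuples obtained by iterating $s(i+1) = \mathcal{T}(s(i),d(i);\theta)$ and evaluating the closed-form expression for $x(k)$ above. The main obstacle is precisely the coupling between these two ingredients: the effective disturbances $w(i)$ cannot be chosen independently but are tied together through the controller's internal state, so I must argue that the enumeration over $(s,d)$-histories actually terminates, rather than merely that each individual $x(k)$ lands in a fixed finite set. The resolution is that the $x$-coordinate depends only on the last $h$ effective disturbances while $s$ takes values in the finite set $Q$; consequently the increasing union of reachable tuples, being contained in a fixed finite set, must stabilize after finitely many $k$, which yields the desired finite closed-form description.
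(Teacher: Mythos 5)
Your proof is correct and follows essentially the same route as the paper's: nilpotency of $A$ truncates the variation-of-constants formula for $x(k)$ to its last $h$ terms, each drawn from a finite set of effective disturbances (your $W$ is simply the paper's finite output set $\{o(s,d;\theta)\}_{s\in Q,\, d\in D}$ with the $B\cdot(\;)+d$ map folded in), so the reachable set is contained in a finite product of the form $X(\theta)\times\{\theta\}\times Q$. Your closing observation that the cumulative enumeration over $(s,d)$-histories stabilizes, being an increasing sequence of subsets of a fixed finite set, is a sound elaboration of the closed-form claim, which the paper handles implicitly via its explicit formula for $x(t)$.
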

Note that Theorem~\ref{thm:finite_reach} is the only result that requires Assumption 1 in this work.  Given Theorem~\ref{thm:finite_reach}, we want to design the functions $\mathcal{T}$ and $o$ such that the internal state $s$ of the controller belongs to a finite set $Q$. Recall that by Proposition~\ref{thm:finite_D} and Assumption 3, the input $d$ of the controller also belongs to a finite set $D$. Thus, the controller $\Sigma_{c}$ is a system with finite states and inputs. In the literature, this type of system is called a \emph{mealy machine}, a special class of automata.

\begin{definition}[Mealy Machine]
\label{def:dfts}
A \emph{Mealy Machine} $\Sigma_{fts}$ is a quintuple $(Q,D, \mathcal{T}, \Theta, o)$, 
where:
\begin{itemize}
	\item $Q$ is a finite set of discrete states;
	\item $D$ is a finite set of actions;
	\item $\mathcal{T}: Q\times D \rightarrow Q$ is the state transition function that maps each state-action pair to the next state;
	\item $\Theta$ is a finite set of outputs;
	\item $o: Q\times D \rightarrow	\Theta$ is the output function that maps each state-action pair to an element in the set $ \Theta$.
\end{itemize}    
\end{definition}
With slightly abusing notations, we denote both the action set of a mealy machine and the disturbance set of $\Sigma$ by $D$, since in this work we only consider the disturbance set $D$ as the action set.  The transition function $\mathcal{T}$ and the output function $o$ are designed by the user. 
 We  parameterize  the output set $ \Theta $ of a mealy machine by a parameter vector $ \theta $:  Suppose that  $ \Theta = \{u_i\}_{i=1}^L $, where each $ u_i $ is a vector of variables in $ \R^m $. Given a vector $\theta = ( \overline{u}_{1}, \cdots, \overline{u}_{L})\in \R^{mL} $, we define a parameterized output function  $o(s,d;  \theta): Q\times D \rightarrow \R^{m}$ such that $ o(s,d; \theta)= \overline{u}_{i}$ for $o(s,d)=u_{i}$. A simple example of a mealy machine and the parameterized output function is shown below.
\begin{example} \label{ex:toy_dfts} 
	Let $D = \{ d_1, d_2\}\subset \R^{n}$, and \mbox{$Q = \{s_1, s_2\}$}.  The state transition function $\mathcal{T}$ is shown in Fig.~\ref{fig:toy_fts}. The output set is $\Theta = \{ u_1, u_2\}$.  The output function is \mbox{$o(s_1, d_1) = o(s_2, d_1) = u_1$} and \mbox{$o(s_1,d_2) = o(s_2,d_2) = u_2$}. Let the controller parameter be \mbox{$\theta  = ( \overline{u}_1, \overline{u}_2) = (1,2)$}. Then, the parameterized output function is \mbox{$o(s_1, d_1; \theta) = o(s_2, d_1; \theta) = 1$} and \mbox{$o(s_1,d_2; \theta) = o(s_2,d_2; \theta) = 2$}. \END 

\begin{figure}[]
	\centering
	\includegraphics[width=0.2\textwidth]{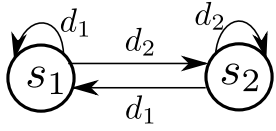}
	\caption{A toy mealy machine controller.}
	\label{fig:toy_fts}
\end{figure}
\end{example}

 Next, we provide guidance on how to construct $C_{cl}$ efficiently. Since $s\in Q$, with $Q$ finite, we can decompose $C_{cl}$ into $ \vert Q \vert $ subsets:
\begin{align}
	C_{cl} =  \bigcup_{s_{i}\in Q} C_{sub}(s_{i})\times \{s_{i}\} ,
\end{align}
where:
\begin{align} \label{eqn:C_sub} 
	C_{sub}(s_{i}) =  \left\{(x, \theta) \in \R^{n+mL} \mid \mathcal{R}( \Sigma_{cl}, (x, \theta, s_{i})) \subseteq S_{cl}\right\}. 
\end{align}
For each state $(x', \theta, s') \in \mathcal{R}(\Sigma_{cl}, (x,\theta, s_{i})) $:
\begin{align}
\label{eq:show_fin}
	(x', \theta, s') \in S_{cl} \Leftrightarrow (x', o(s',d; \theta))\in S, \ \forall d\in D.
\end{align}
As $x'$ and $ o(s',d; \theta)$ are linear functions of $x$, $\theta$, and $d\in D$, the condition $(x', o(s',d; \theta))\in S$ is a set of linear inequality constraints on $(x, \theta)$. Thus, by \eqref{eq:show_fin} and the fact that $\mathcal{R}(\Sigma_{cl}, (x, \theta, s_{i}))$ is finite, the set $C_{sub}(s_{i})$ 
in \eqref{eqn:C_sub} can be expressed by a set of linear inequality constraints, that is a polytope in $\R^{n+mL}$. We use the following example to illustrate the computation of $C_{sub}(s_{i})$.

\begin{example} 
\label{ex:c_sub} 
Consider the mealy machine controller in Example~\ref{ex:toy_dfts}. Suppose that the nilpotent matrix $A$ of system $\Sigma$ satisfies $A^{2}=0$. Then, the reachable set $\mathcal{R}(\Sigma_{cl}, (x, \theta, s_{i}))$  contains $7$ elements, that is
\begin{align*}
	\mathcal{R}(\Sigma_{cl}, (x, \theta, s_{i})) &=\left\{ \bigcup_{j,k=1,2} (ABu_{j} + Bu_k + Ad_j+  d_{k}, \theta, s_{k}) \right\}\\
		&\hspace*{-7.5mm}\cup\bigg\{ (x, \theta, s_i) \bigg\} \cup \left\{ \bigcup_{j=1,2} (Ax+B u_j+ d_j, \theta, s_{j}) \right\},
\end{align*}
where $ \theta = (u_1,u_2)$. Suppose that the safe set is \mbox{$S = \left\{ (x,u) \in \R^{n+m} \mid G_xx+ G_{u}u \leq h\right\}$}. Then:
\begin{align*}
	C_{sub}(s_{1}) = C_{sub}(s_2) &= \\
	    \{(x,u) \in \R^{n+m} \mid &G_{x} x +G_{u}u_{i}\leq h,\\
        &G_{x}(Ax+Bu_{j}+d_{j}) + G_{u}u_{j} \leq h, \\
        &G_{x}(ABu_{j} + Bu_k + Ad_j+  d_{k}) + G_{u}u_{i}\leq h, \\
        &\forall i,j,k\in\{1,2\}\}.
\end{align*}
Finally, $ C_{cl}  =  \bigcup_{i=1,2}C_{sub}(s_i)\times \{s_{i}\}.$ 
\end{example}

So far we constructed $C_{cl}$ in closed-form. However, to obtain an RCIS from $C_{cl}$, we have to project $C_{cl}$ onto the first $n$ coordinates and then compute the convex hull of the projected set. Both projection and convex hull operations are time consuming and thus undesirable. In what follows we derive an \emph{implicit expression} of the resulting RCISs. 

\subsection{Implicit Controlled Invariant Set Expression (Method 1)} \label{sec:implicit_cinv_1} 
\noindent \textbf{Assumption 4:} The safe set $S$ of $\Sigma$ is bounded. 

Given that $S$ is bounded, the projection of $C_{sub}(s_{i})$ onto the first $n$ coordinates is also bounded. Thus, we can always find a large enough hyperbox $B$ such that:
\begin{align*}
    \textrm{Proj}_{1:n}( B\cap C_{sub}(s_{i})) = \textrm{Proj}_{1:n}(C_{sub}(s_{i})).
\end{align*}
Denote the intersection of the hyperbox $B$ and the polytope $C_{sub}(s_{i})$ by $\overline{C}_{sub}(s_{i})= B \cap C_{sub}(s_{i})$. The projection of $C_{cl}$ is exactly the union of the projections of polytopes $\overline{C}_{sub}(s_{i})$ over $s_{i}\in Q$, that is:
\begin{align}
	\textrm{Proj}_{1:n}(C_{cl}) = \bigcup_{s_{i}\in Q} \textrm{Proj}_{1:n} \left( \overline{C}_{sub}(s_{i}) \right). 
\end{align}
Since the order of convex hull operation and the projection can be swapped, we have that:
\begin{align}
\label{eqn:ch_ex} 
    \begin{split}
	\mathbf{CH} (\textrm{Proj}_{1:n}(C_{cl})) 
	= \textrm{Proj}_{1:n} \left( \mathbf{CH} \left( \bigcup_{s_{i}\in Q} \overline{C}_{sub}(s_{i}) \right) \right).  
	\end{split}
\end{align}
Since $ \overline{C}_{sub}(s_{i})$ is a polytope, 
it can be written as:
\begin{align*}
	C_{sub}(s_{i}) = \{ (x, \theta) \mid G_{i}(x, \theta) \leq h_{i}\}.
\end{align*}
Then, we construct the polytope:
\begin{align}
\label{eqn:C_lambda}
\begin{split}
    	C_{\lambda} = \bigg\{  &\left(x, \theta, x_1, \theta_1, \cdots,x_{\vert Q\vert}, \theta_{\vert Q\vert}, \lambda_1, \cdots, \lambda_{ \vert Q \vert }\right) \mid \\
 & \lambda_{i} \geq 0, G_{i} (x_i, \theta_i ) \leq \lambda_{i}h_{i},\forall 1 \leq i \leq \vert Q\vert, \\
  & \sum^{ \vert Q \vert }_{i=1} \lambda_{i}= 1,  \sum^{ \vert Q \vert }_{i=1} x_i = x, \sum^{ \vert Q \vert }_{i=1} \theta_i = \theta\bigg\}. 
\end{split}
\end{align}
Under Assumption 4, given that $(x,\theta)\in R^{n+mL}$, we have:
\begin{align}
	\mathbf{CH} \left( \bigcup_{s_{i}\in Q} \overline{C}_{sub}(s_{i}) \right)  = \textrm{Proj}_{1:(n+mL)} \left(C_{ \lambda} \right) , \\
	\textrm{Proj}_{1:n}\left(\mathbf{CH} \left( \bigcup_{s_{i}\in Q} \overline{C}_{sub}(s_{i}) \right) \right)  = \textrm{Proj}_{1:n}(C_{ \lambda}). \label{eqn:ch_proj} 
\end{align}
By \eqref{eqn:ch_ex} and \eqref{eqn:ch_proj}, the RCIS $ \mathbf{CH}(\textrm{Proj}_{1:n}(C_{cl}))$ is the projection of $C_{ \lambda}$ onto the first $n$ coordinates. In other words, $C_{ \lambda}$ is an \emph{implicit expression} of the RCIS $ \mathbf{CH}(\textrm{Proj}_{1:n}(C_{cl})).$

\begin{remark}
	Assumption 4 is only required if we want the equality $ \textrm{Proj}_{1:n}(C_{ \lambda})= \mathbf{CH}(\textrm{Proj}_{1:n}(C_{cl}))$ to hold. In the next subsection, we introduce an alternative implicit expression which does not require a bounded safe set $S$.
\end{remark}

\subsection{Implicit Controlled Invariant Set Expression (Method 2)} \label{sec:implicit_cinv_2} 
In Example~\ref{ex:c_sub}, the projection of $C_{cl}$ is already convex and, thus, the convex hull computation is omitted. It turns out that the convexity of $\textrm{Proj}_{1:n}(C_{cl})$ is not a coincidence. We define the nested state transition function by:
\begin{align}
\hspace*{-1em}
\mathcal{T}^{*}(s, (d(t))_{t=0}^{k})
\hspace*{-0.2em}
=
\hspace*{-0.2em}
\begin{cases}
\hspace*{-0.2em}
	\mathcal{T}(s,d(0)), & k=0,\\
\hspace*{-0.2em}
	\mathcal{T}(\mathcal{T}^{*}(s, (d(t))_{t=0}^{k-1}), d(k)), & k>0.
\end{cases}
\end{align}
Similarly, the nested output function $o^{*}(s,(d(t))_{t=0}^{k})$ is:
\begin{align} 
\label{eqn:o_star} 
\hspace*{-0.65em}
o^{*}(s, (d(t))_{t=0}^{k})=
\begin{cases}
	o(s,d(0)), & k=0,\\
	o(\mathcal{T}^{*}(s, (d(t))_{t=0}^{k-1}), d(k)), & k>0.
\end{cases}
\end{align}

Define a preorder relation ``$ \succeq $" on $Q$ as follows. For any \mbox{$s_{1}, s_{2}\in Q$},  we have that $s_{1} \succeq  s_{2}$ if for all \mbox{$(d_{1}(t))_{t=0}^{k_1} \in D^{k_1}$} and \mbox{$(d_{2}(t))_{t=0}^{k_2} \in D^{k_2}$} with non-negative integers \mbox{$k_1, k_2 \leq  \vert Q \vert^{2}$}, \mbox{$o^{*}( s_{1}, (d_1(t))_{t=0}^{k_1}) = o^{*}(s_{1}, (d_2(t))_{t=0}^{k_2})$} implies  \mbox{$o^{*}( s_{2}, (d_1(t))_{t=0}^{k_1}) = o^{*}(s_{2}, (d_2(t))_{t=0}^{k_2})$} .

Here, the ``$=$" sign in \mbox{$o^{*}(s_{i}, (d_1(t))_{t=0}^{k_1}) = o^{*}(s_{i}, (d_2(t))_{t=0}^{k_2})$} is interpreted as the function $o^{*}$ mapping two inputs to the same element in $ \Theta$ (regardless of the parameter $ \theta $). Given the definition of the relation $ \succeq $ on $Q$, we can algorithmically check if two states $s$ and $ s' $ satisfy $ s\succeq s' $  with worst case time complexity $O(\vert Q \vert^{2})$.

Note that the ``$ \succeq $'' relation is not a partial order as it does not satisfy the antisymmetry condition, namely it is possible to have $s_1 \succeq s_2$ and $s_2 \succeq s_1$ but $s_1 \not=s_2$. However, the following theorem shows that the $\succeq $ relation in $Q$ actually implies the partial order on the sets $\{ \textrm{Proj}_{1:n}(C_{sub}(s))\}_{s\in Q}$ defined by the set inclusion.
\begin{theorem} \label{thm:partial} 
	Given the $ \succeq $ relation defined on the set $Q$, for any two states $s_1$, $s_2\in Q$, $s_1 \succeq s_2$ implies that \mbox{$\textrm{Proj}_{1:n}(C_{sub}(s_{1})) \supseteq \textrm{Proj}_{1:n} (C_{sub}(s_2))$}. 
\end{theorem}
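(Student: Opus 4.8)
The plan is to show that every state feasible from $s_2$ is also feasible from $s_1$, by transferring a witnessing controller parameter through the refinement encoded by $\succeq$. I would first make the membership conditions explicit. Using \eqref{eqn:C_sub} and expanding $\Sigma_{cl}$, a pair $(x,\theta)$ lies in $C_{sub}(s_i)$ exactly when, for every disturbance sequence $\mathbf{d}=(d(0),\dots,d(k-1))\in D^{k}$ and every next disturbance $d\in D$, the reached state $x(k)=A^{k}x+\sum_{j=0}^{k-1}A^{k-1-j}\big(B\,o^{*}(s_i,(d(t))_{t=0}^{j};\theta)+d(j)\big)$ and the output $o^{*}(s_i,(\mathbf{d},d);\theta)$ satisfy $(x(k),o^{*}(s_i,(\mathbf{d},d);\theta))\in S$. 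The crucial observation is that $x(k)$ depends on $\theta$ only through the finite list of output symbols emitted along the run; hence both the state and the output appearing in each constraint are determined by the \emph{sequence of symbols} that $s_i$ emits on $\mathbf{d}$ and on $d$, not by $\theta$ directly. Consequently $x\in\textrm{Proj}_{1:n}(C_{sub}(s_i))$ iff some $\theta$ satisfies all these constraints.

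Second, I would upgrade the bounded definition of $\succeq$ to an unbounded one. Writing $\sigma_i(\mathbf{e})$ for the index in $\{1,\dots,L\}$ of the symbol $o^{*}(s_i,\mathbf{e})\in\Theta$, the stated relation $s_1\succeq s_2$ compares only sequences of length at most $|Q|^2$; I would show it in fact implies $\sigma_1(\mathbf{e}_1)=\sigma_1(\mathbf{e}_2)\Rightarrow\sigma_2(\mathbf{e}_1)=\sigma_2(\mathbf{e}_2)$ for sequences $\mathbf{e}_1,\mathbf{e}_2$ of \emph{arbitrary} length. This is a pumping argument on the product automaton whose state after reading a prefix $\mathbf{d}$ is the pair $(\mathcal{T}^{*}(s_1,\mathbf{d}),\mathcal{T}^{*}(s_2,\mathbf{d}))\in Q\times Q$: since this space has $|Q|^2$ elements, any run can be shortened by excising cycles to one of length at most $|Q|^2$ reaching the same pair-state, and the emitted symbols from \emph{both} $s_1$ and $s_2$ depend only on the pair-state reached by the prefix and on the last letter. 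Reducing $\mathbf{e}_1,\mathbf{e}_2$ this way and invoking the bounded hypothesis yields the unbounded refinement property.

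Third, the refinement property says $\sigma_2$ factors through $\sigma_1$, i.e. there is a well-defined map $\psi$ on $\mathrm{Im}(\sigma_1)$ with $\sigma_2=\psi\circ\sigma_1$. Given $x\in\textrm{Proj}_{1:n}(C_{sub}(s_2))$ with witness $\theta_2=(\bar u^{(2)}_1,\dots,\bar u^{(2)}_L)$, I would build $\theta_1$ by pulling parameters back through $\psi$: set the $j$-th block of $\theta_1$ equal to the $\psi(j)$-th block of $\theta_2$ for $j\in\mathrm{Im}(\sigma_1)$, and arbitrarily otherwise. Then for every sequence $\mathbf{e}$ one obtains $o^{*}(s_1,\mathbf{e};\theta_1)=\bar u^{(2)}_{\psi(\sigma_1(\mathbf{e}))}=\bar u^{(2)}_{\sigma_2(\mathbf{e})}=o^{*}(s_2,\mathbf{e};\theta_2)$, so the control inputs emitted along any run from $s_1$ under $\theta_1$ coincide with those from $s_2$ under $\theta_2$ on the same disturbance sequence. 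By the explicit formula for $x(k)$, the reached states and the constrained outputs then coincide as well, so the constraint system defining $C_{sub}(s_1)$ at $(x,\theta_1)$ is, index by index over $(\mathbf{d},d)$, identical to the one defining $C_{sub}(s_2)$ at $(x,\theta_2)$, which holds by assumption. Hence $(x,\theta_1)\in C_{sub}(s_1)$ and $x\in\textrm{Proj}_{1:n}(C_{sub}(s_1))$, giving the claimed inclusion.

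I expect the main obstacle to be the second step: making the passage from the length-$|Q|^2$ definition of $\succeq$ to the property for sequences of unbounded length fully rigorous, and in particular verifying that excising cycles in the product automaton preserves the emitted symbols of both states simultaneously. The remaining parts — well-definedness of $\psi$ (which is precisely the refinement property) and the verification that matching output symbols forces matching reached states via the reachability formula — are then routine.
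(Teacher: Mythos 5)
Your proposal is correct and follows essentially the same route as the paper's proof: you reduce the projection inclusion to producing, for each $\theta_2$, a parameter $\theta_1$ with $o^{*}(s_1,\cdot;\theta_1)=o^{*}(s_2,\cdot;\theta_2)$ over all finite disturbance sequences, and you justify the passage between sequences of bounded length $\vert Q\vert^{2}$ and arbitrary length via the $\vert Q\vert^{2}$-state product automaton, which is precisely the content of the paper's Lemma~\ref{lemma:1}. Your explicit construction of the factoring map $\psi$ and the pull-back of the parameter blocks of $\theta_2$ is exactly what makes the paper's system of linear equations on $\theta_1$ solvable under $s_1\succeq s_2$, so it soundly fills in the step the paper disposes of with ``it can be checked.''
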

We call a state $s_{max}\in Q$ a \emph{maximal state} if for any $ s'\in Q $, $ s'\succeq s_{max} $ implies $ s_{max} \succeq s' $, and call a state $ s_{dom} $ a \emph{dominant state} if $s_{dom} \succeq s$ for all $s\in Q$. Denote $Q_{max}$ as the set of all the maximal states in $Q$.

\begin{corollary} \label{cor:1} 
	Suppose that there exists a dominant state $s_{dom}\in Q$.
	Then, \mbox{$\textrm{Proj}_{1:n}(C_{cl}) = \textrm{Proj}_{1:n}(C_{sub}(s_{dom}))$}. 
\end{corollary}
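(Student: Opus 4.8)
\textbf{Proof proposal for Corollary~\ref{cor:1}.}

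The plan is to derive this as a direct consequence of Theorem~\ref{thm:partial} together with the decomposition of $C_{cl}$ into the subsets $C_{sub}(s_i)\times\{s_i\}$. First I would observe that projection commutes with union, so that
\begin{align*}
\textrm{Proj}_{1:n}(C_{cl}) = \bigcup_{s\in Q} \textrm{Proj}_{1:n}(C_{sub}(s)).
\end{align*}
The goal is then to show that every set in this union is contained in the single set $\textrm{Proj}_{1:n}(C_{sub}(s_{dom}))$, which would collapse the union to exactly that one set.

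The key step is to apply the definition of a dominant state: by hypothesis $s_{dom}\succeq s$ for every $s\in Q$. Theorem~\ref{thm:partial} then immediately yields $\textrm{Proj}_{1:n}(C_{sub}(s_{dom})) \supseteq \textrm{Proj}_{1:n}(C_{sub}(s))$ for each $s\in Q$. Hence each term of the union is a subset of $\textrm{Proj}_{1:n}(C_{sub}(s_{dom}))$, which gives the inclusion
\begin{align*}
\textrm{Proj}_{1:n}(C_{cl}) = \bigcup_{s\in Q} \textrm{Proj}_{1:n}(C_{sub}(s)) \subseteq \textrm{Proj}_{1:n}(C_{sub}(s_{dom})).
\end{align*}
The reverse inclusion is trivial, since $s_{dom}\in Q$ means $\textrm{Proj}_{1:n}(C_{sub}(s_{dom}))$ is itself one of the terms in the union and is therefore contained in it. Combining the two inclusions gives the claimed equality.

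I expect essentially no serious obstacle here, since all the substantive work is carried by Theorem~\ref{thm:partial}; the corollary is a packaging of that monotonicity result under the existence of a top element for the preorder. The only point requiring a little care is the bookkeeping of the preorder direction — confirming that ``dominant'' ($s_{dom}\succeq s$ for all $s$) is precisely the hypothesis under which Theorem~\ref{thm:partial} can be invoked with $s_1 = s_{dom}$ and $s_2 = s$ for every $s$, so that $s_{dom}$ sits above all others in the induced set-inclusion order. A secondary remark worth making is that the result does not require $s_{dom}$ to be unique, nor does it require the preorder to be a partial order; it suffices that one state dominates all others. This is consistent with the earlier observation that $\succeq$ fails antisymmetry, so distinct dominant states could coexist, but they would all project to the same maximal set.
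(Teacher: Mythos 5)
Your proof is correct and is exactly the argument the paper intends: the corollary is stated as an immediate consequence of Theorem~\ref{thm:partial} applied with $s_1 = s_{dom}$, $s_2 = s$ for each $s\in Q$, combined with the decomposition $C_{cl} = \bigcup_{s_i\in Q} C_{sub}(s_i)\times\{s_i\}$ and the fact that projection commutes with union (the paper leaves this unwritten precisely because it is this routine). Your side remarks on non-uniqueness of dominant states and the failure of antisymmetry are accurate and consistent with the paper's discussion preceding Corollary~\ref{cor:2}.
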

Corollary~\ref{cor:1}  explains our observation in Example~\ref{ex:c_sub}.
\begin{example}
	For the mealy machine in Example~\ref{ex:toy_dfts}, $s_1 \succeq s_2$ and $s_2 \succeq s_1$.  Thus, both $s_1$ and $s_2$ are dominant states. Then, \mbox{$\textrm{Proj}_{1:n}(C_{cl}) =\textrm{Proj}_{1:n} (C_{sub}(s_1)) = \textrm{Proj}_{1:n}(C_{sub}(s_2))$}.
\end{example}
\begin{corollary} \label{cor:2} 
	Define a partition over $ Q_{max} $ as follows: For $ s $ and $ s'\in Q $, $ s $ and $ s' $ belong to the same component if $ s  \succeq s'$ and/or $ s' \succeq s $. Let a set $ Q_{0} \subseteq Q_{max} $ contain exactly one state from each component of this partition. Then, \mbox{$\textrm{Proj}_{1:n}(C_{cl}) =\cup_{s_{max}\in Q_{0}} \textrm{Proj}_{1:n} (C_{sub}(s_{max}))$}. 
\end{corollary}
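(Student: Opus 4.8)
The plan is to reduce the full union $\textrm{Proj}_{1:n}(C_{cl}) = \bigcup_{s\in Q}\textrm{Proj}_{1:n}(C_{sub}(s))$ — which follows immediately from the decomposition $C_{cl}=\bigcup_{s_i\in Q}C_{sub}(s_i)\times\{s_i\}$ together with the fact that projection commutes with union — down to a union over the representatives $Q_0$. This is done in two moves: first discard every state whose projected set is already absorbed by that of some maximal state, and then collapse each component of the partition to a single representative. The two engines are Theorem~\ref{thm:partial}, which turns the preorder $\succeq$ into set inclusions of projections, and the finiteness of $Q$.

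First I would show that every state is dominated by a maximal state, i.e. for each $s\in Q$ there is $s_{max}\in Q_{max}$ with $s_{max}\succeq s$. Since $\succeq$ is only a preorder, I would argue through the quotient: declare $s\sim s'$ when $s\succeq s'$ and $s'\succeq s$; as $\succeq$ is reflexive and transitive, $\sim$ is an equivalence relation and $\succeq$ descends to a genuine partial order on the finite set $Q/\!\sim$. Every finite poset has the property that each element lies below a maximal one, and I would verify that the maximal elements of $Q/\!\sim$ are exactly the classes of the maximal states in $Q_{max}$ (a class is maximal precisely when no state strictly dominates its members, which is the maximality condition in the statement, and this condition is easily seen to be class-invariant). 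Lifting back gives $s_{max}\succeq s$, whence Theorem~\ref{thm:partial} yields $\textrm{Proj}_{1:n}(C_{sub}(s))\subseteq\textrm{Proj}_{1:n}(C_{sub}(s_{max}))$. Thus every term of the full union is swallowed by a maximal term, giving
\begin{align*}
\textrm{Proj}_{1:n}(C_{cl}) = \bigcup_{s_{max}\in Q_{max}} \textrm{Proj}_{1:n}(C_{sub}(s_{max})).
\end{align*}

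Next I would collapse $Q_{max}$ to $Q_0$. The key observation is that on $Q_{max}$ the relation $\succeq$ is symmetric: if $s,s'\in Q_{max}$ and $s\succeq s'$, then since $s'$ is maximal the defining implication forces $s'\succeq s$. Hence on $Q_{max}$ the relation ``$s\succeq s'$ and/or $s'\succeq s$'' coincides with $\sim$ restricted to $Q_{max}$, so the components in the statement are exactly the $\sim$-classes and the partition is well defined. Within a single component all states are mutually related, so Theorem~\ref{thm:partial} applied in both directions gives $\textrm{Proj}_{1:n}(C_{sub}(s))=\textrm{Proj}_{1:n}(C_{sub}(s'))$ for any $s,s'$ in that component. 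Therefore replacing each component by its representative in $Q_0$ leaves the union unchanged, and combining with the previous display completes the argument.

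The step I expect to be the main obstacle is the first one: ``maximal state'' is phrased directly through the preorder rather than as a literal maximal element, so I must be careful that a maximal state dominating every element really exists despite the lack of antisymmetry. Passing to the quotient poset $Q/\!\sim$ is the clean way to handle this, after which the remaining claims are routine inclusions via Theorem~\ref{thm:partial}.
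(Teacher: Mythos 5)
Your proof is correct and takes exactly the route the paper intends: Corollary~\ref{cor:2} is stated without proof as a direct consequence of Theorem~\ref{thm:partial}, and your argument---quotienting the preorder $\succeq$ by mutual domination so that every state of the finite poset $Q/\!\sim$ lies below a maximal class, then applying Theorem~\ref{thm:partial} once to absorb all of $Q$ into $Q_{max}$ and in both directions to collapse each component onto its representative in $Q_0$---is precisely that derivation, with the quotient step correctly supplying the detail (existence of a dominating maximal state despite non-antisymmetry) that the paper leaves implicit. No gaps to report.
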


According to Corollary~\ref{cor:1}, if a dominant state $s_{dom}$ exists in $ Q$, the RCIS $ \mathbf{CH}(\textrm{Proj}_{1:n}(C_{cl}))$ is simply the projection of $C_{sub}(s_{dom})$ onto the first $n$ coordinates. In this case, we can directly take $C_{sub}(s_{dom})$ as the implicit representation of the RCIS $ \mathbf{CH}(\textrm{Proj}_{1:n}(C_{cl}))$; otherwise, by Section \ref{sec:implicit_cinv_1}, we construct $ C_{\lambda} $ as  the implicit RCIS.  Note that according to Corollary \ref{cor:2}, we can replace $ Q $ by $ Q_{0} $ in the definition of $ C_{\lambda} $.
The overall procedure of computing implicit RCISs is summarized in Algorithm~\ref{alg:1}. 

\begin{algorithm}[t!]
\caption{Compute Implicit Controlled Invariant Set}
\label{alg:1}
\begin{algorithmic}
	\State {\bf{inputs:}} $\Sigma$, $S$, $\Sigma_{c}=(Q,D,\mathcal{T},\Theta, o)$. 
	\vspace{2mm}
	\If {a dominant state $s_{dom}\in Q$ exists}
	\State Compute $C_{sub}(s_{dom})$ as in \eqref{eqn:C_sub}.
	\State \Return $C_{sub}(s_{dom})$.
	\Else
	\For  {$s_{i}\in Q_{0} \subseteq Q$ }
	\State Compute $C_{sub}(s_{i})$ as in \eqref{eqn:C_sub}.
	\EndFor
	\State Compute $C_{ \lambda}$ as in \eqref{eqn:C_lambda}. 
	\State \Return $C_{ \lambda}$.
	\EndIf
	\vspace{2mm}
\end{algorithmic}
\end{algorithm}

\section{Bridge Measurable and Non-measurable Disturbances} \label{sec:bridge} 
In this section we prove a connection between measurable and non-measurable disturbances, which enables our method to compute RCISs for systems with any type of disturbances.

Suppose a system $\Sigma$ in \eqref{eq:dtls} has a non-measurable disturbance $d\in D$. We construct a system $\Sigma'$ with a measurable disturbance by adding a one-step delay:
\begin{align} 
\label{eq:dtls_2} 
    \hspace*{-1em}
    \Sigma' \hspace*{-0.25em} : \hspace*{-0.25em} \begin{bmatrix}
		x(t+1) \\ u(t+1)	
    \end{bmatrix} = 
	\begin{bmatrix}
		A & B\\
		0 & 0
    \end{bmatrix} 
    \hspace*{-0.25em}
    \begin{bmatrix}
	x(t)\\u(t) 
    \end{bmatrix} + \begin{bmatrix}
    0\\I_{m}
		\end{bmatrix} v(t) + \begin{bmatrix}
    I_{n} \\0	
	\end{bmatrix} d(t),
\end{align}
with state $(x,u)\in \R^{n+m}$, input $v\in \R^{m}$, a measurable disturbance $d\in D \subseteq \R^{n}$, and $I_{n}$, $I_{m}$ being the $n\times n$ and $m\times m$ identity matrices respectively. 

Let the safe set of $\Sigma$ be $S \subset \R^{n+m}$. We want to compute an RCIS for $\Sigma$ within $S$. The next theorem reveals that this can be achieved by computing an RCIS for $\Sigma'$ within $S \times \R^{m}$.
\begin{theorem} \label{thm:non_meas} 
	Given the systems $\Sigma$ in \eqref{eq:dtls} and $\Sigma'$ in \eqref{eq:dtls_2}, if $C'$ is an RCIS for $\Sigma'$ in $S \times \R^{m}$, the projection $\textrm{Proj}_{1:n}(C')$ of $C'$ onto the first $n$ coordinates is an RCIS for $\Sigma$ in $S$. 

If $C'$ is the maximal RCIS for $\Sigma'$ in $S\times \R^{m}$, then $\textrm{Proj}_{1:n}(C')$ is the maximal RCIS for $\Sigma$ in $S$.
\end{theorem}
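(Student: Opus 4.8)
The plan is to prove the two claims separately, exploiting the fact that in $\Sigma'$ the control $u$ that drives the transition $x(t+1)=Ax(t)+Bu(t)+d(t)$ is part of the \emph{state} at time $t$: it was committed one step earlier as $v(t-1)$, before $d(t)$ is revealed. This one-step delay is exactly what converts the $\forall d\,\exists v$ quantifier order of Definition~\ref{def:cismeas} for $\Sigma'$ into the $\exists u\,\forall d$ order of Definition~\ref{def:cisnonmeas} for $\Sigma$.

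For the first claim (soundness), I would set $C = \textrm{Proj}_{1:n}(C')$ and take any $x\in C$. By definition of the projection there is some $u$ with $(x,u)\in C'$. I then invoke the hypothesis that $C'$ is an RCIS for $\Sigma'$ within $S\times\R^m$: since the safe-set membership $((x,u),v)\in S\times\R^m$ reduces to $(x,u)\in S$ and is independent of $d$ and $v$, we immediately obtain $(x,u)\in S$; and for every $d\in D$ there exists $v$ (possibly depending on $d$) with the successor $(Ax+Bu+d,\,v)\in C'$, hence $Ax+Bu+d\in \textrm{Proj}_{1:n}(C')=C$. Because the single $u$ coming from the lift works simultaneously for all $d$, the pair $(x,u)$ witnesses the condition of Definition~\ref{def:cisnonmeas}, so $C$ is an RCIS for $\Sigma$ within $S$.

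For the second claim (maximality), by the first claim $C=\textrm{Proj}_{1:n}(C')$ is already an RCIS for $\Sigma$, so it suffices to show it contains every RCIS $\tilde C$ of $\Sigma$ in $S$. The key idea is to lift $\tilde C$ to a candidate set for $\Sigma'$,
\begin{align*}
\tilde C' = \{(x,u)\mid x\in\tilde C,\ (x,u)\in S,\ Ax+Bu+d\in\tilde C\ \forall d\in D\},
\end{align*}
and to verify that $\tilde C'$ is an RCIS for $\Sigma'$ within $S\times\R^m$ with $\textrm{Proj}_{1:n}(\tilde C')=\tilde C$. The projection equality uses the RCIS property of $\tilde C$: for each $x\in\tilde C$ a witnessing input $u$ exists, so $(x,u)\in\tilde C'$, giving $\tilde C\subseteq \textrm{Proj}_{1:n}(\tilde C')$, while the reverse inclusion is immediate. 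Invariance is checked by taking $(x,u)\in\tilde C'$ and any $d\in D$, noting $x^{+}:=Ax+Bu+d\in\tilde C$ by construction, and then selecting $v$ to be a witnessing input for $x^{+}$ so that $(x^{+},v)\in\tilde C'$. Once $\tilde C'$ is an RCIS for $\Sigma'$, maximality of $C'$ forces $\tilde C'\subseteq C'$, whence $\tilde C=\textrm{Proj}_{1:n}(\tilde C')\subseteq\textrm{Proj}_{1:n}(C')=C$, establishing maximality.

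The routine parts are the quantifier bookkeeping and the observation that $S\times\R^m$ places no constraint on $v$. The main obstacle I anticipate is the maximality direction: one must produce the correct lift $\tilde C'$ and confirm that the witnessing input $v$ for the successor state can be chosen as a function of $d$ without breaking invariance. This is precisely where the measurability of $d$ in $\Sigma'$, together with the delayed role of $u$ as the state component committed before $d$ is seen, must be used carefully to align the two definitions.
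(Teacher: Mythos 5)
Your proposal is correct and takes essentially the same route as the paper: the soundness direction is the identical projection argument (the committed state component $u$ serves as the single input working for all $d$), and your lifted set $\tilde C'$ is precisely the paper's construction $C_{max}' = \{(x,u) \mid (x,u)\in S,\ Ax+Bu+d\in C_{max},\ \forall d\in D\}$, which the paper applies to the maximal RCIS $C_{max}$ while you apply it to an arbitrary RCIS $\tilde C$ --- a cosmetic variation that, if anything, avoids presupposing the existence of $C_{max}$. Your extra constraint $x\in\tilde C$ in the lift is harmless (it makes the projection equality $\textrm{Proj}_{1:n}(\tilde C')=\tilde C$ exact, whereas the paper only needs the inclusion $C_{max}\subseteq \textrm{Proj}_{1:n}(C_{max}')$), and the invariance check via a witnessing $v$ for the successor state matches the paper's step-one argument verbatim.
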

Thanks to Theorem~\ref{thm:non_meas}, in terms of computing RCISs, any method designed for measurable disturbances can be applied to systems with non-measurable disturbances.  

\section{Case Study}
\label{sec:compEval}

\subsection{Mealy machines with dominant states}
\label{sec:dfts} 
We present two classes of mealy machines that contain at least one dominant state.  
\subsubsection{Simple Loop} Given an integer $L>0$, let $Q = \{s_{i}\}_{i=1}^{L} $ and $\Theta =\{u_{i}\}_{i=1}^{L}$. Define, for all $d\in D$, the state transition and  output functions as:
\begin{align}
	\mathcal{T}(s_{i},d) = \begin{cases}
		s_{i+1} & i < L,\\
		s_{1} & i = L.
   \end{cases},~
	o(s_{i},d) = \begin{cases}
		u_{i+1} & i < L,\\
		u_{1} & i = L.
   \end{cases} 
\end{align}
For such a structure, any $s\in Q$ is a dominant state. 
\subsubsection{Tree Structure} Suppose the cardinality of the disturbance set $ \vert D \vert  = K$.  Given an integer $L>0$, define \mbox{$N = (K^{L}-1)/(K-1)$}.  Let $Q = \{s_0\}\cup \bigcup_{i=1}^{L} D^{i}$. That is, $Q$ is the union of $s_0$ and all finite sequences of elements in $D$ with length less than or equal to $L$. We assign one output for each $s\in Q$ denoted by $u(s)$. Thus, $\Theta = \{u(s)\}_{s\in Q}$. The state transition function is defined as for all $d\in D$:
\begin{align}
	\mathcal{T}(s,d) = \begin{cases}
		d & s = s_0,\\
		sd & s\in D^{k}, k < L,\\
		s(2:L)d & s\in D^{L},
	\end{cases}
\end{align}
where $sd\in D^{k+1}$ denotes the concatenation of \mbox{$s\in D^{k}$} and \mbox{$d\in D$}, and \mbox{$s(2:L)d$} denotes the concatenation of the subsequence $s(2:L)$ of $s$ and $d\in D$. For instance, if \mbox{$s=d_1 d_2 \cdots d_{L}$}, then \mbox{$s(2:L)d = d_2 \cdots d_{L} d$}.

 For $L=K=2$, the state transition function is shown in Fig.~\ref{fig:tree}. We call this class of mealy machines \emph{tree structure} since the mealy machine transition graph, as shown in Fig.~\ref{fig:tree}, embeds a tree with $s_0$ the root node.

Given the state transition function, the output function is simply defined as:
\begin{align}
	o(s,d) = u(\mathcal{T}(s,d)).
\end{align}

For any tree-structure mealy machine, $s_0$ is the dominant state. Intuitively, the tree-structure mealy machine memorizes the past $L$ disturbance measurements and assigns a control input to each possible combination of the past $L$ disturbances. 
\begin{figure}[]
	\centering
	\includegraphics[width=0.25\textwidth]{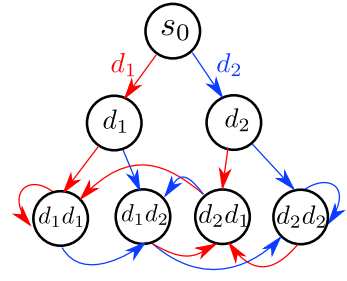}
	\caption{The tree-structure mealy machine ($L=3$, $D=\{d_1,d_2\}$). The red arrow and blue arrow indicate transitions under $d_1$ and $d_2$ respectively. }
	\label{fig:tree}
\end{figure}

Finally, for both classes of mealy machines introduced here, it can be proven that by increasing the number of discrete states (complexity), that is increasing $L$, we tend to obtain larger RCISs. 

\subsection{Lane keeping supervision}
Consider a $4$-dimensional linearized bicycle vehicle dynamics with respect to a constant longitudinal velocity $30m/s$ in \cite{smith2016interdependence}, discretized with time step $\Delta t = 0.1 s$. The system states consist of the lateral displacement $y$, lateral velocity $v$, yaw angle $\Delta \Psi$ and yaw rate $r$. The control input $u$ is the steering angle. The disturbance is $d = (0, 0, - r_{d}\Delta t, 0)$, where $r_{d}\in\R$ is the road curvature within a range $ \vert r_{d} \vert \leq r_{d,max}$.  The safe set is given by constraints $ \vert y \vert\leq 0.9$, $ \vert v \vert \leq 1.2$, $ \vert  \Delta \Psi \vert \leq  0.05 $, $ \vert r \vert \leq 0.3$ and $ \vert u\vert \leq \pi/2$. 

The future road curvature can be measured in ahead and thus $d$ is a measurable disturbance\cite{xu2019design}. 
We compare our method with Method 2 in \cite{anevlavis2021enhanced}, LMI-based low-complexity RCIS in \cite{tahir2014low} 
and the Maximal RCIS. Our method uses the tree structure with $L=4$ in Section~\ref{sec:dfts} as the mealy machine controller.  For Method 2 in  \cite{anevlavis2021enhanced}, we set the parameter $ L=14 $ and compute the lifted set in high dimensional space as an implicit RCIS. Note that Method 2 with parameter $ L$ is the same as our method equipped with simple loop controller with parameter $L$. For the LMI-based method in \cite{tahir2014low}, we set the parameter $ \rho=1 $ and run the iterative algorithm until convergence.  The methods in \cite{anevlavis2021enhanced}, \cite{tahir2014low} consider non-measurable disturbances only. To make a fair comparison, our method computes  RCISs  for $ d $ being measurable and/or non-measurable respectively.  We evaluate the algorithm performance by their computation time  and the volume percentage of the resulting RCISs to the Maximal RCIS. The volume percentage is estimated by monte carlo method with sample size $ N=10^4 $.

The comparison results are shown in Table \ref{tab:res_1}: According to the $2$nd, $ 3 $rd and $ 4 $th  rows of Table \ref{tab:res_1}, when dealing with non-measurable disturbances only, our method outperforms Method 2 of \cite{anevlavis2021enhanced} and LMI-based method in \cite{tahir2014low} in both the computation time and the volume of the resulting RCIS for all $ r_{d,max} $, showing a strong robustness to non-measurable disturbances. The LMI-based method encounters an infeasible optimization problem in all test cases and thus has $ 0 $ volume percentage. Method 2 of \cite{anevlavis2021enhanced}, as a special case of the proposed method, has a decent volume percentage when the disturbance range is small. But as $ r_{d,max} > 0.015 $, the RCIS from Method 2 of \cite{anevlavis2021enhanced} becomes empty, while our method still has volume percentage greater than $ 98\% $ for both measurable and non-measurable cases. 

Shown by the first  $ 2 $ rows of Table \ref{tab:res_1}, when $ r_{d,max} = 0.07$, our method returns a nonempty RCIS for $ d $ being measurable, but returns an empty RCIS for $ d $ being non-measurable. Thus, by considering $ d $ as a measurable disturbance, our method is robust to a larger range of disturbances. Finally, comparing the first $ 2 $ rows with the last row of Table \ref{tab:res_1}, when $ r_{d,max} < 0.07 $, our method computes implicit RCISs with almost the same size as the Maximal RCISs, using less than $ 0.3\% $ computation time of the Maximal RCISs. 

\begin{table}[]
	\centering
	\caption{Computation Time and Volume Percentage of Computed RCIS to the Maximal RCIS. (Lane Keeping)}
	\label{tab:res_1}
	\scriptsize
	\begin{tabular}{c|cccccc}
		\hline
		$r_{d,max}$ && $0.01$ & $0.015$ &  $0.03$ & $0.05$ & $0.07$\\ 
		\hline
  		\multirow{2}{6em}{Our method ($ d $ meas.)}& Time (s)& $0.042$ & $0.035$  & $0.037$ & $0.035$  & $0.032$\\
  		& Vol (\%)& $100.00$ & $99.99$ & $99.89$ &  $98.91$ & $74.75$ \\
  		\hline
		\multirow{2}{6em}{Our method ($ d $ non-meas.)}& Time (s)& $0.071$ & $0.062$  & $0.072$ & $0.063$  & $0.060$\\
  		& Vol (\%)& $100.00$ & $100.00$ & $100.00$ &  $99.89$ & $0$ \\
  		  		\hline
		\multirow{2}{6em}{Method 2 of \cite{anevlavis2021enhanced} ($L=14$)} & Time (s) & $0.506$ & $0.443$ & $0.404$ & $0.397$ & $0.484$  \\
	  	 &Vol (\%) & $99.82$ & $87.64$ & $ 0$ & $0$ & $0$\\
	  	   		\hline
		\multirow{2}{6em}{LMI Method \cite{tahir2014low} ($ \rho=1 $)} & Time (s) & $0.449$ & $0.519$ & $0.562$ & $0.500$ & $0.564$  \\
		&Vol (\%) & $0$ & $0$ & $ 0$ & $0$ & $0$\\
		  		\hline
		Maximal RCIS & Time (s) & $13.084$ & $18.918$ & $15.525$ & $15.698$ & $21.513$\\
		\hline
	\end{tabular}
\end{table}

Next, we illustrate how the computed implicit RCIS can be used to supervise a nominal controller. Suppose the current state $x(t)$ belongs to the RCIS $\textrm{Proj}_{1:n}(C_{sub}(s_0))$.  Given the nominal steering input $u_{d}(t)$ and the disturbance $d(t)$ at time $t$, we minimally change the input $u_{d}(t)$ such that the next state $x(t+1)$ stays in the RCIS $\textrm{Proj}_{1:n}(C_{sub}(s_0))$ by solving the following quadratic program:
\begin{align}
	\begin{split} \label{eqn:opt_project} 
		\min_{u(t), \theta}& \Vert u_{d}(t) -u(t)\Vert_{2}^{2}\\
		\text{ subject to }& (Ax(t) + Bu(t)+ d(t), \theta)\in C_{sub}(s_0)  , 
	\end{split}
\end{align}
where $A$, $B$ are the system matrices and $ \theta$ is a slack variable.  
We use the solution $u(t)$ of \eqref{eqn:opt_project} as the actual steering input to the vehicle.   The feasibility of \eqref{eqn:opt_project} is guaranteed since $\textrm{Proj}_{1:n}(C_{sub}(s_0))$ is an RCIS and $x(t)\in \textrm{Proj}_{1:n}(C_{sub}(s_0))$. 

We compare the supervised inputs obtained in \eqref{eqn:opt_project} to the ones obtained based on the Maximal RCIS $C_{max}$ via the following quadratic program:
\begin{align}
	\begin{split} \label{eqn:opt_max} 
		\min_{u(t)}& \Vert u_{d}(t) -u(t)\Vert_{2}^{2}\\
		\text{ subject to }& Ax(t) + Bu(t)+ d(t)\in C_{max}.
	\end{split}
\end{align}

The nominal controller is $u_{d}(t) = -0.1812 y - 0.0373 v - 4.5996 \Delta\Psi - 0.6649 r$. We run two simulations with the same initial states and control inputs obtained from \eqref{eqn:opt_project} and \eqref{eqn:opt_max} respectively ($r_{d,max}=0.015$). As shown in Fig.~\ref{fig:road} and Fig.~\ref{fig:input}, the vehicle maneuvers and steering inputs supervised by our implicit RCIS $C_{sub}(s_0)$ and the Maximal RCIS are very close to each other. The maximal difference between the control inputs from  \eqref{eqn:opt_project} and \eqref{eqn:opt_max} is around $ 0.035 $ at $ t=6.5s $.  This observation is consistent with the results shown in Table~\ref{tab:res_1}, where the volume of our implicit RCIS is approximately $99.96\%$ of the volume of the Maximal RCIS.

\begin{figure}[]
	\centering
	\includegraphics[width=0.45\textwidth]{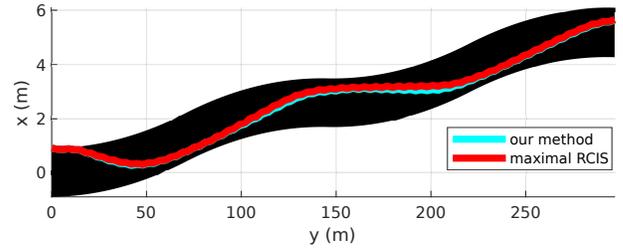}
	\caption{Vehicle maneuvers under control inputs supervised by our implicit RCIS (cyan curve) and the Maximal RCIS (red curve). The black region indicates the road surface.}
	\label{fig:road}
\end{figure}

\begin{figure}[]
	\centering
	\includegraphics[width=0.5\textwidth]{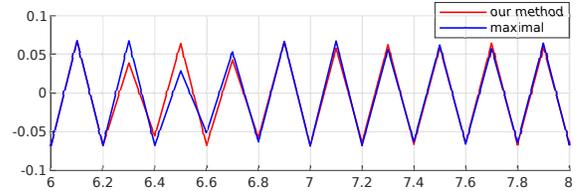}
	\caption{Vehicle steering inputs supervised by our implicit RCIS (red curve) and the Maximal RCIS (blue curve) ($ 6 \leq t\leq 8 $).}
	\label{fig:input}
\end{figure}

\subsection{Chain of integrators}
Consider a discrete-time $ n $-th order integrator:
\begin{align}
x(t+1)= \left( I_{n} + \begin{bmatrix}
0 & I_{n-1}\\
0 & 0
\end{bmatrix} \right) x(t) + \begin{bmatrix}
0\\1	
\end{bmatrix} (u(t)+ d(t))
\end{align}
with $ x\in \R^{n} $, $u\in\R $ and $ d\in \R$. $I_{n}$ indicates the identity matrix in $\R^{n\times n}$. $d$ is considered as a measurable disturbance within range $\vert d\vert \leq 0.1 $. The safe set is \mbox{$S = \{(x,u) \mid \vert x_{i} \vert \leq 1, \forall i=1, \dots, n,~ \vert u \vert \leq 1\}.$}

\begin{figure}[]
	\centering
	\includegraphics[width=0.3\textwidth]{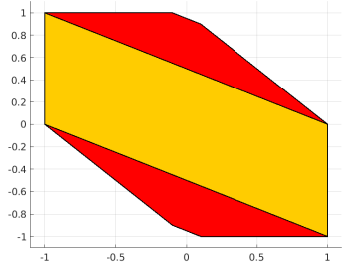}
	\caption{RCISs for double integrator. Yellow: RCIS from LMI-based method \cite{tahir2014low}. Red: the Maximal RCIS computed by both our method and \cite{anevlavis2021enhanced}.}
	\label{fig:double_integrator}
\end{figure}
 The comparison results of our approach (tree structure, $L=4$) with Method 2 of \cite{anevlavis2021enhanced} ($L=14$) and the LMI-based method in \cite{tahir2014low} ($\rho=1$) are shown in Table \ref{tab:res_3}. For $n \leq 4$, our method outperforms the other 2 methods in computation time and volume percentage.  For $n=2$, our method returns exactly the Maximal RCIS, depicted in Fig.  \ref{fig:double_integrator}. For $n\geq 6$, the Maximal RCIS does not terminate within $1$ hour. Thus we only check if the computed RCISs are empty or not instead of comparing their volume to the Maximal RCIS. When $n\geq 6$, our method is the only one that returns non-empty RCISs. Note that even though the implicit RCIS has closed-form expression, the number of constraints in the implicit RCIS grows exponentially as $n$ increases. In this example, for $n=10$, it takes about $339$s for our method to generate the implicit RCIS, which is a polytope in $\R^{24}$ with about $36\times 10^4$ constraints.
\begin{table}[]
	\centering
	\caption{Computation Time and Volume Percentage of Computed RCIS to the Maximal RCIS (Chain of Integrators).}
	\label{tab:res_3}
	\scriptsize
	\begin{tabular}{c|cccccc}
		\hline
		$n$ && $2$ & $4$ &  $6$ & $8$ & $10$\\ 
		\hline
  		\multirow{2}{6em}{Our method ($ d $ meas.)}& Time (s)& $0.005$ & $0.025$  & $0.368$ & $9.287$  & $339.060$\\
  		& Vol (\%)& $100$ & $98.79$ & $>0$ &  $>0$ & $>0$ \\
  		\hline
		\multirow{2}{6em}{Method 2 of \cite{anevlavis2021enhanced} ($L=14$)} & Time (s) & $0.183$ & $0.341$ & $2.420$ & $7.168$ & $37.105$  \\
	  	 &Vol (\%) & $74.49$ & $0$ & $ 0$ & $0$ & $0$\\
	  	   		\hline
		\multirow{2}{6em}{LMI Method \cite{tahir2014low} ($ \rho=1 $)} & Time (s) & $3.465$ & $0.603$ & $0.952$ & $1.405$ & $3.1402$  \\
		&Vol (\%) & $66.85$ & $\approx 0$ & $ 0$ & $0$ & $0$\\
		  		\hline
		Maximal RCIS  & Time (s) & $0.734$ & $11.114$ & $>3600$ & $>3600$ & $>3600$\\
		\hline
	\end{tabular}
\end{table}

\subsection{Truck with $N$ trailers}
Consider a continuous-time model for a truck with $N$ trailers \cite{rungger2013guidedsynthltl}. The state consists of the $N+1$ velocity values, each for the truck and the $N$ trailers, and the $N$ spring elongations in between them. Hence, $N$ trailers correspond to dimension $n = 2N+1$. The input is the velocity of the truck. We discretize the model with a sampling time of $T_s$ seconds assuming piecewise constant inputs.

\begin{table}[t!]
\caption{Computation  Time  and  Volume  Percentage  of Computed RCIS to the Maximal RCIS (Truck with $N$ trailers).}
\label{table:truck}
\centering
\scriptsize
\begin{tabular}{c|ccccc}
\hline
System dimension & & $n=3$ & $n=5$ & $n=7$ & $n=9$   \\
\hline

\multirow{2}{6em}{Our method}
								& Time (s.)   & $0.109$ & $0.781$ & $8.669$  & $163.1$ \\
								& Vol ($\%$)  & $100$ & $100$ & $>0$  & $0$ \\				
\hline

\multirow{2}{6em}{Method 2 of \cite{anevlavis2021enhanced} (L=14)}
								& Time (s.)   & $0.547$ & $0.814$ & $1.352$  & 6.577 \\
								& Vol ($\%$)  & $100$ & $98.90$ & $0$  & $0$ \\				
\hline

Maximal RCIS                    & Time (s.)   & $0.746$ & $13.76$ & $>3600$ & $>3600$ \\ 
\hline
\end{tabular}
\vspace{-5mm}
\end{table}

Table~\ref{table:truck} shows the results of this case study for our method and the approach in \cite{anevlavis2021enhanced}. For $n\geq7$ the method computing the Maximal RCIS does not terminate after 1 hour, and, hence, we only check non-emptiness of sets instead of volume percentage. When the Maximal RCIS is computed, we see that our approach covers it, but due to the implicit representation, the running times are much faster. However, we see that in this example, after some point, as the dimension becomes large, the set our algorithm returns is empty. This can be understood as by adding more trailers the noise from each spring compounds towards the ones behind it, resulting in the shrinking of the RCIS.  

\section{Conclusion}\label{sec:concl}
In this paper, we present a novel method of computing implicit RCISs in closed-form. The key insight is to construct a closed-loop system with a parameterized automaton-based controller. The implicit RCISs obtained by our method characterize the set of feasible initial states and controller parameters under which the system state-input trajectory stays in the safe set. Compared with the standard iterative methods\cite{bertsekas1972infreach, nilsson2017correct}, all the computations of our method are done in one-shot, which guarantees finite-time termination and better scalability. Several numerical examples are provided to demonstrate the efficiency and practicality of the proposed method.

\bibliographystyle{IEEEtran}
\bibliography{ref}

\balance

\appendix

\begin{proof} [Proof of Proposition~\ref{thm:RInv} ]
	Let $x\in C$ and $d\in D$. By the Definition~\ref{def:reachautonsys}, $ \mathcal{R}(\Sigma, f(x,d)) \subseteq \mathcal{R} (\Sigma_{a}, x) \subseteq S$ and thus $f(x,d) \in C$. Hence, $C$ is an RPIS by definition. 
	
	Suppose $x$ belongs to an arbitrary RPIS within  $S$. By definition, $ \mathcal{R}(\Sigma_{a},x) \subseteq S$. Thus, $x\in C$ and $C$ is the Maximal RPIS. 
\end{proof}

\begin{proof}[Proof of Proposition~\ref{thm:finite_D}]
	The "only if" direction is obvious. It is left to show the "if" direction. Suppose the safe set is $S$. Let $C$ be an RCIS for the system $\Sigma'$ in $S$ and $x$ be a point in $C$. We want to show that for all $d\in D$, there exists $u$ such that $Ax+Bu+d\in S$. Since $D = \mathbf{CH}(D_{v})$, there exists a finite $K>0$ such that  $d = \sum_{i=1}^{K} \alpha_{i} d_{i} $ for some $d_1$, ..., $d_{K}\in D_{v}$ and some $ \alpha_1$, ..., $ \alpha_{K} \geq 0$ satisfying $\sum_{i=1}^{K} \alpha_{i}=1$. Since $C$ is controlled invariant for $\Sigma'$, for each $d_{i}\in D_{v}$, there exists $u_{i}$ such that $(x,u_{i})\in S$ and $Ax+Bu_{i}+E d_{i}\in C$. Define $u = \sum^{K}_{i=1} \alpha_{i} u_{i}$. It is easy to show that $(x,u)\in S$ and $Ax +Bu+Ed \in C$, by the convexity of $S$ and $C$. Thus, $C$ is an RCIS within $S$ for $\Sigma$.
\end{proof}

\begin{proof}[Proof of Theorem~\ref{thm:CH}]
	Denote $ \mathbf{CH}(\textrm{Proj}_{1:n}(C_{cl}))$ by $C_{p}$. Let $x\in C_{p}$ and $d\in D$. We want to show that there exists $u$ such that $(x,u)\in S$ and $Ax+Bu+d\in C_{p}$.

	By definition of convex hull, there exist a positive integer $k > 0$, $k$ vectors $x_i\in \textrm{Proj}_{1:n}(C_{cl})$ and $k$ scalars $\alpha_{i}\in [0,1]$ for $i$ from $1$ to $k$ such that $ \sum^{k}_{i=1} \alpha_{i}= 1$ and $ \sum^{k}_{i=1} \alpha_{i}x_{i} = x$. For each $i$, there exists $\theta_{i}$ and $s_{i}$ such that $(x_{i},\theta_{i},s_{i})\in C_{cl}$. We define $u_{i} = o(s_{i},d)$. Note that by the definition of $S_{cl}$, $(x_{i},u_{i})\in S$. Also, since $C_{cl}$ is an RPIS, $(Ax_{i}+Bu_{i}+d, \theta_{i}, \mathcal{T}(s_{i},d))\in C_{cl}$ and thus $Ax_{i} + Bu_{i} + d\in C_{proj}$. We define $u = \sum^{k}_{i=1} \alpha_{i} u_{i}$. Since $S$ is convex and $(x_{i},u_{i})\in S$, $(x,u) = \sum^{k}_{i=1} \alpha_{i} (x_{i},u_{i}) \in S$. Since $C_{p}$ is convex and $Ax_{i}+Bu_{i}+d\in C_{p}$, $ Ax + Bu +d = \sum^{k}_{i=1} \alpha_{i}(Ax_{i}+Bu_{i}+d)\in C_{p}$.  Thus, $C_{p}$ is an RCIS for the system $\Sigma$ in $S$.
\end{proof}

\begin{proof}[Proof of Theorem~\ref{thm:finite_reach}]
	We want to show $ \mathcal{R}(\Sigma_{cl}, (x, \theta, s))$ is finite. Let $((x(t), \theta(t),s(t)))_{t=0}^{ \infty}$ be the trajectory of $\Sigma_{cl}$ with initial state $ (x(0), \theta(0), s(0)) = (x, \theta, s)$. Let $(d(t))_{t=0}^{ \infty}$ be the disturbance sequence. Given $A$ is nilpotent, that is $A^{h} = 0$ for some $h \geq 0$, we have that 
	\begin{align}
		\label{eqn:x} 
		x(t) = 
		\begin{cases}
			A^{t}x + \sum^{t-1}_{i=0}A^{t-1-i}[B o(s(i),d(i); \theta) + d(i)] & t < h,\\
			\sum^{t-1}_{i=t-h}A^{t-1-i}[B o(s(i),d(i); \theta) + d(i)] & t \geq  h.
		\end{cases}
	\end{align}
	Since $s(t)$ and $d(t)$ belong to finite sets $Q$ and $D$, $o(s(t),d(t); \theta)$ belongs to the finite set $ U( \theta) = \{o(s,d; \theta)\}_{s\in Q, d\in D}$.  Thus, according to \eqref{eqn:x},  $x(t)$, as a function of $o(s(t),d(t); \theta)$ and $d(t)$ for $t\geq h$, must belongs to a finite set, denoted by $X(\theta)$.  Thus, the reachable set $\mathcal{R}(\Sigma_{cl}, (x, \theta, s)) \subseteq X(\theta)\times \{ \theta\}\times Q $ is a finite set. 
\end{proof}

\begin{proof}[Proof of Theorem~\ref{thm:partial}]
	We want to derive a sufficient condition under which $\textrm{Proj}_{1:n}(C_{sub}(s_1)) \supseteq \textrm{Proj}_{1:n}(C_{sub}(s_2)) $. 
	Note that if for all $(x, \theta_2)\in C_{sub}(s_2)$, there exists $ \theta_1$ such that $(x, \theta_1)\in C_{sub}(s_1)$, then we have $\textrm{Proj}_{1:n}(C_{sub}(s_1)) \supseteq \textrm{Proj}_{1:n}(C_{sub}(s_2))$.

	Similar to how we define $o^{*}(s,d)$, we define the parameterized nested output function as
\begin{align} \label{eqn:o_star_p} 
o^{*}(s, (d(t))_{t=0}^{k}; \theta)=
\begin{cases}
	o(s,d(0); \theta) & k=0,\\
	o(\mathcal{T}^{*}(s, (d(t))_{t=0}^{k-1}), d(k); \theta) & k>0.
\end{cases}
\end{align}
	Given $s$ and $ \theta$, the parameterized nested output function $o^{*}(s, \cdot; \theta)$ becomes a function of $(d(t))_{t=0}^{k}$ in $\cup_{i=1}^{ \infty} D^{i}$. If for any $\theta_2$,  we can always find a $ \theta_1$ such that the functions $o^{*}(s_2, \cdot; \theta_2)= o^{*}(s_1, \cdot; \theta_1)$, then for all $(x, \theta_2)\in C_{sub}(s_2)$, $(x, \theta_1)\in C_{sub}(s_1)$. Intuitively, recall that $(x, \theta_2)\in C_{sub}(s_2)$ if $(x(k), o^{*}(s_2, (d(t))_{t=0}^{k}))\in S$ for all $k \geq 0$ and $(d(t))_{t=0}^{k}\in D^{k}$. If we know that $(x(k), o^{*}(s_2, (d(t))_{t=0}^{k}; \theta_2))\in S$ for all $k \geq 0$, then we know $(x(k), o^{*}(s_2, (d(t))_{t=0}^{k}; \theta_2))\in S$ for all $k \geq 0$ since $o^{*}(s_2,\cdot; \theta_2) = o^{*}(s_1,\cdot; \theta_1)$. Thus, $(x, \theta_1)\in C_{sub}(s_1)$.

	Now our goal is to derive a sufficient condition under which there exists a $ \theta_1$ such  that $ o^{*}(s_1, \cdot; \theta_1) = o^{*}(s_2, \cdot; \theta_2)$ for all $ \theta_2$. 

	\begin{lemma} \label{lemma:1} 
		Given $s_1$, $s_2\in Q$ and $ \theta_1$, $ \theta_2$, the functions $o^{*}(s_1,\cdot; \theta_1) = o^{*}(s_2,\cdot; \theta_2)$ if and only if $o^{*}(s_1, (d(t))_{t=0}^{k}; \theta_1) = o^{*}(s_2, (d(t))_{t=0}^{k}; \theta_2)$ for all $k \leq \vert Q \vert^{2}$ and all $(d(t))_{t=0}^{k}\in D^{k}$.
	\end{lemma}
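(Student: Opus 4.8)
$\textbf{Proof proposal for Lemma~\ref{lemma:1}.}$

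The plan is to prove the two directions separately, with the forward direction being immediate and the reverse direction requiring a pumping/periodicity argument. The ``only if'' direction is trivial: if the two nested output functions agree on \emph{all} input sequences, then in particular they agree on sequences of length $k \leq \vert Q\vert^2$. The substance of the lemma is the ``if'' direction, where we must bootstrap agreement on short sequences (length at most $\vert Q\vert^2$) into agreement on sequences of arbitrary length.

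For the ``if'' direction, the key observation is that the value $o^{*}(s_i, (d(t))_{t=0}^{k}; \theta_i)$ depends on the input history only through the \emph{discrete state} reached, namely $\mathcal{T}^{*}(s_i, (d(t))_{t=0}^{k-1})\in Q$. I would track the joint evolution of the two controllers by considering the product state $(\mathcal{T}^{*}(s_1, \cdot), \mathcal{T}^{*}(s_2, \cdot)) \in Q\times Q$, which lives in a set of size $\vert Q\vert^2$. First I would argue by contradiction: suppose the nested output functions disagree on some sequence, and let $(d(t))_{t=0}^{k}$ be a disagreeing sequence of \emph{minimal} length $k$. By the hypothesis, this minimal $k$ must exceed $\vert Q\vert^2$. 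Then, along the prefix of length $k$, the sequence of product states $(\mathcal{T}^{*}(s_1, (d(t))_{t=0}^{j-1}), \mathcal{T}^{*}(s_2, (d(t))_{t=0}^{j-1}))$ for $j = 0, 1, \dots, k$ has more than $\vert Q\vert^2$ entries and hence, by the pigeonhole principle, must revisit some product state.

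The main obstacle, and the crux of the argument, is to use this repeated product state to excise a loop and produce a \emph{shorter} disagreeing sequence, contradicting minimality. Concretely, if the product state at time $a$ equals the product state at time $b$ with $a < b \leq k$, then deleting the input block $d(a:b-1)$ yields a shorter sequence; I must verify that the two controllers still reach the same respective discrete states after this deletion (which follows because both components of the product state are restored at the splice point) and hence that the disagreement at the final symbol is preserved. This shows the shortened sequence still witnesses disagreement, contradicting the minimality of $k$. Care is needed in handling the indices at the splice and in confirming that the \emph{output} symbol at the contradicting final step is determined by the product state just before it, so that excising the loop does not alter the final disagreement. Once the loop-excision step is established, the contradiction forces $k \leq \vert Q\vert^2$, completing the ``if'' direction and the lemma.
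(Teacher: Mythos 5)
Your proposal is correct and takes essentially the same route as the paper: the paper builds the product Mealy machine on $Q\times Q$ (with synchronized transitions and paired outputs) and argues that any disagreement between $o^{*}(s_1,\cdot;\theta_1)$ and $o^{*}(s_2,\cdot;\theta_2)$ is witnessed by reaching a product state with differing outputs, which---since the product machine has only $\vert Q\vert^{2}$ states---can always be done within $\vert Q\vert^{2}$ steps. Your pigeonhole-plus-loop-excision argument on the product state trajectory is exactly the standard justification of that shortest-path claim, which the paper states in one line; you simply make it explicit.
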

	According to Lemma~\ref{lemma:1}, given any $ \theta_2$, we can directly solve for  a $ \theta_1$ satisfying $o^{*}(s_1, (d(t))_{t=0}^{k}; \theta_1) = o^{*}(s_2, (d(t))_{t=0}^{k}; \theta_2)$ for all $k \leq \vert Q \vert^{2}$ and all $(d(t))_{t=0}^{k}\in D^{k}$, which is a  system of linear equations on $ \theta_1$. It can be checked that given any $ \theta_2$, the solvability of the system of equations on $ \theta_1$ is guaranteed if for all $(d_1(t))_{t=0}^{k_1}$ and $(d_2(t))_{t=0}^{k_2}$ with $k_1$, $k_2\leq \vert Q \vert^{2}$, $o^{*}(s_1, (d_1(t))_{t=0}^{k_1}) = o^{*}(s_1, (d_2(t))_{t=0}^{k_2})$ implies $o^{*}(s_2, (d_1(t))_{t=0}^{k_1}) = o^{*}(s_2, (d_2(t))_{t=0}^{k_2})$, that is $s_1 \succeq s_2$ by definition.   
\end{proof}

\begin{proof}[Proof of Lemma~\ref{lemma:1} ]
	Given the mealy machine $(Q, D, \mathcal{T}, \Theta, o)$, we can construct a product mealy machine $(Q\times Q, D, \mathcal{T}_{pd}, \Theta\times \Theta,o_{pd})$ where for all $s_i $, $s_{j}\in Q $ and $ d\in D$
	\begin{align}
		\mathcal{T}_{pd}((s_{i},s_{j}), d) &=  (\mathcal{T}(s_i, d), \mathcal{T}(s_{j},d)), \\ 
		o_{pd}((s_{i},s_{j}), d) &= (o(s_{i},d), o(s_{j},d)). 
	\end{align}
	Given $ \theta_1$ and $ \theta_2$ as two value assignments of $\Theta$, we define the parameterized output function $o_{pd}((s_{i},s_{j}), d; \theta_1, \theta_2) = (o( s_{i}, d; \theta_1), o(s_{j}, d; \theta_2))$. 

	Given $s_1$, $s_2\in Q$ and $ \theta_1$ and $ \theta_2$, by construction, $o_{pd}^{*}((s_1,s_2), \cdot;\theta)$ is equal to $(o^{*}(s_1, \cdot; \theta_1),o^{*}(s_2, \cdot; \theta_2))$. Thus, $o^{*}(s_1, \cdot; \theta_1)\not=o^{*}(s_2, \cdot; \theta_2)$  if and only if there exists a $(d(t))_{t=0}^{k}$ such that $(s_{1}', s_{2}')= \mathcal{T}_{pd}^{*}((s_1,s_2), (d(t))_{t=0}^{k-1})$ and $o_{pd}((s_1',s_2'), d(k); \theta_1, \theta_2) = (u_1,u_2) $ for some $u_1$, $u_2\in \Theta$,  $\overline{u}_1\not= \overline{u}_2$. Since there are only $ \vert Q \vert^{2}$ states in the product mealy machine, if $(s_1',s_2')$ can be visited from $(s_1,s_2)$  under action sequence $(d(t))_{t=0}^{k-1}$, the smallest $k$ we need is less than or equal to $ \vert Q \vert^{2}$. Thus, if  $o^{*}(s_1, \cdot; \theta_1)\not=o^{*}(s_2, \cdot; \theta_2)$, there must exists a $(d(t))_{t=0}^{k}$ with $k \leq  \vert Q \vert^{2}$ such that $o^{*}(s_1, (d(t))_{t=0}^{k}; \theta_1)\not=o^{*}(s_2, (d(t))_{t=0}^{k}; \theta_2)$
\end{proof}

\begin{proof}[Proof of Theorem~\ref{thm:non_meas}]
	Denote $C_{p} = \textrm{Proj}_{1:n}(C')$. Suppose $C'$ is an RCIS of $\Sigma'$ in $S\times \R^{m}$. Let $x\in C_{p}$. We want to show that there exist $u$ such that $(x,u)\in S$ and for all $d\in D$, $Ax+Bu+d \in C_{p}$.  

	By definition of $C_{p}$, there exists $u\in \R^{m}$ such that $(x,u)\in C' \subseteq S$. Furthermore, since $C'$ is controlled invariant, there exists $v\in \R^{m}$ such that $(Ax+Bu+d, v)\in C'$ for all $d\in D$. Thus, $Ax+Bu+d\in C_{p}$ for all $d\in D$. Thus, we showed that $C_{p}$ is an RCIS for the system $\Sigma$ in $S$.

	Next, suppose that $C'$ is the Maximal RCIS for $\Sigma'$ in $S\times \R^{m}$. Also, suppose that $C_{max}$ is the Maximal RCIS for $\Sigma$ in $S$. We want to show that $\textrm{Proj}_{1:n}(C') = C_{max}$. Note that $\textrm{Proj}_{1:n}(C') \subseteq C_{max}$ as $\textrm{Proj}_{1:n} (C')$ is controlled invariant for $\Sigma$ in $S$. We need to show that $\textrm{Proj}_{1:n}(C') \supseteq C_{max} $, which is done in $3$ steps.

	First, define the set $C_{max}' = \{ (x,u) \mid (x,u)\in S, Ax+Bu+d\in C_{max}, \forall d\in D\}$. We want to show that $C_{max}'$ is controlled invariant for $\Sigma'$ in $S\times \R^{m}$.  Let $(x,u)\in C_{max}'$ and $d\in D$. By construction, $(x,u)\in S$ and $x^{+}= Ax+Bu+d\in C_{max}$. Since $C_{max}$  is controlled invariant for $\Sigma$, there exists $v\in \R^{m}$ such that $(x^{+},v)\in S$ and $Ax^{+} + Bu+ d^{+} \in C_{max}$ for all $d^{+}\in D$. Thus, by definition of $C_{max}'$, $(x^{+},v) = (Ax+Bu+d, v)\in C_{max}'$. Thus, $C_{max}' $ is an RCIS for $\Sigma'$ in $S\times \R^{m}$. 

	Second, as $C'$ is the Maximal RCIS for $\Sigma'$ in $S\times\R^{m}$, $C'\geq C_{max}'$. Thus, $\textrm{Proj}_{1:n}(C') \supseteq \textrm{Proj}_{1:n}(C_{max}')$. 

	Finally, note that for all $x\in C_{max}$, there exists $u$ such that $(x,u)\in S$ and $Ax+Bu+d\in C_{max}$ for all $d\in D$, namely that $(x,u)\in C_{max}'$. Hence, $C_{max} \subseteq \textrm{Proj}_{1:n}(C_{max}') \subseteq \textrm{Proj}_{1:n}(C')$. That is, $\textrm{Proj}_{1:n}(C')=C_{max}$ is the Maximal RCIS for $\Sigma$ in S. 
\end{proof}
\end{document}